\documentclass[letterpaper, 10 pt, conference]{ieeeconf}  % Comment this line out if you need a4paper

\IEEEoverridecommandlockouts                              % This command is only needed if 
\usepackage{amsmath} % assumes amsmath package installed
\usepackage{amssymb}  % assumes amsmath package installed
\usepackage{xcolor}
\usepackage{graphicx}
\usepackage{booktabs}
\usepackage{multirow}
\usepackage{algorithm}
\usepackage{algpseudocode}
\usepackage{upgreek}

\newtheorem{theorem}{Theorem}
\newtheorem{proposition}{Proposition}
\newtheorem{remark}{Remark}

\renewenvironment{proof}[1][Proof]{%
  \par\noindent\textit{#1.}\enspace\ignorespaces
}{%
  \unskip\nobreak\hfill\QEDopen\par
}

\title{\LARGE \bf
    DAOCP: a dual active set solver for optimal control problems
}

\author{Alberto Zaupa$^{*}$, Samuel Erickson$^{*}$ and Mikael Johansson$^{*}$% <-this % stops a space
\thanks{$^*$ The authors are with the Department of Decision and Control Systems at KTH Royal Institute of Technology, Stockholm, Sweden. Email addresses \texttt{\{zaupa, samuelea, mikaelj\}@kth.se}}%
}

\begin{document}

\maketitle
\thispagestyle{empty}
\pagestyle{empty}

%%%%%%%%%%%%%%%%%%%%%%%%%%%%%%%%%%%%%%%%%%%%%%%%%%%%%%%%%%%%%%%%%%%%%%%%%%%%%%%%
\begin{abstract}
We present DAOCP, a dual active set solver for linear quadratic optimal control problems with stage-wise equality and inequality constraints. Active set methods are leading Model Predictive Control benchmarks for full-body robotics, but existing solvers operate on dense QPs, while typical problem dimensions favor methods that exploit the optimal control structure. DAOCP combines the warm-starting capabilities of active set algorithms with the better computational scaling of structure exploiting solvers, by relying on a generalization of the relationship between the Riccati recursion and the Cholesky factorization of the condensed Hessian. This result enables dual active set iterations to operate directly on the original optimal control problem through recursive computations, without explicitly forming the condensed quadratic program. On robotics benchmarks, DAOCP is the fastest of four solvers in four of five scenarios, cutting average solve time on a 58-state Atlas model by 9$\times$ relative to state of the art solvers DAQP and HPIPM.  

\end{abstract}

%%%%%%%%%%%%%%%%%%%%%%%%%%%%%%%%%%%%%%%%%%%%%%%%%%%%%%%%%%%%%%%%%%%%%%%%%%%%%%%%
\section{Introduction}

Model Predictive Control (MPC) is a core methodology in robotics and 
increasingly used to control complex robotic systems with a large number of
degrees of freedom~\cite{DC2018,MN2017,D2021}. Full-body MPC for
humanoid and legged robots requires solving optimization problems with large
state and control dimensions over moderate to long prediction horizons, within
tight real-time budgets. This dimensionality regime is particularly favorable
for solvers that directly exploit the Optimal Control Problem (OCP) structure,
rather than transforming the problem into a dense Quadratic Program (QP). At
the same time, Active Set (AS) methods have a natural algorithmic advantage
in MPC, due to their ability to effectively exploit warm starts  on closely
related problems. This is reflected by the leading performance of AS methods
like DAQP~\cite{DA2022} on robotics-inspired benchmarks~\cite{qpbenchmark}.

Structure exploiting solvers for MPC have a long history, particularly among 
Interior Point methods~\cite{YW2010, GF2020}. Several authors
have proposed AS algorithms for
OCPs~\cite{DA2006,JB2024,AS2011,IN2013}, but these have not led to the
development of state of the art MPC solvers. Modern Active Set solvers still rely on the condensed MPC formulation~\cite{DA2022,JF2014}, and a competitive implementation designed specifically for OCPs remains missing.

This paper presents DAOCP~\footnote{The solver code is available at https://github.com/AlbertoZaupa/daocp}, a dual AS solver aiming to close this gap. The
solver adopts the high-level algorithm of DAQP and Quadprog~\cite{G1983},
but exploits the OCP structure and avoids condensing. To do so, we generalize
a correspondence between the Riccati recursion and the Cholesky factorization
of the condensed Hessian from~\cite{GF2013_TRCHOL} to problems with stage-wise
equality constraints.
On robotics benchmarks, DAOCP is the fastest of four solvers in four of five
scenarios, cutting average solve time on a 58-state Atlas model by 9$\times$
relative to DAQP and HPIPM. Figure~\ref{fig:atlas} shows the closed-loop
solve times on this benchmark.

\begin{figure}[!t]
    \centering
    \includegraphics[width=\columnwidth]{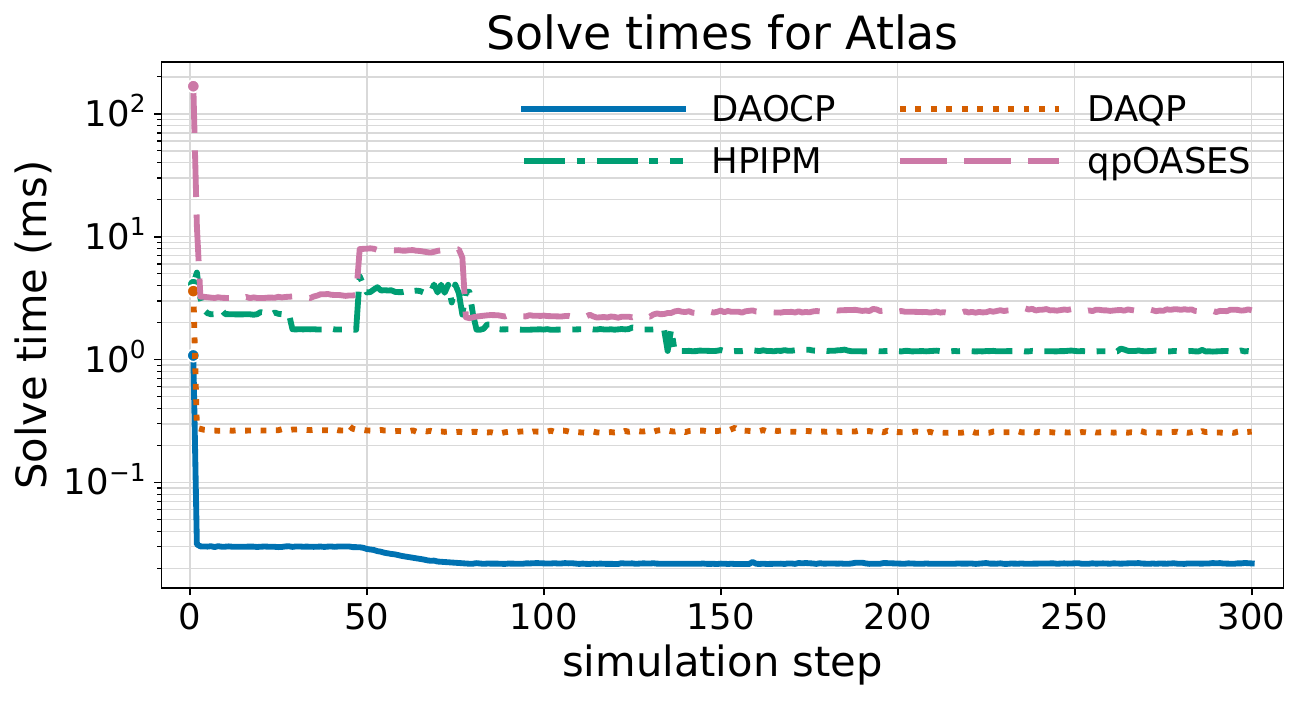}
    \caption{Solve times for the Atlas benchmark. Timings are averaged over 50 independent runs on a per-simulation step basis.}
    \label{fig:atlas}
\end{figure}

The remainder of the paper is organized as follows. In
Section~\ref{sec:background} we introduce the necessary background on MPC
solvers and AS methods in particular. In Section~\ref{sec:daocp} we present the
DAOCP algorithm, with specific focus on how to exploit OCP structure of the
primal problem. The solver is evaluated on a variety of robotics-inspired
benchmarks in Section~\ref{sec:numerical_experiments}, where we demonstrate
its performance advantages over other state of the art MPC solvers. Finally, we
conclude the paper in Section~\ref{sec:conclusion}.

\section{Background}
\label{sec:background}
\subsection{Solving MPC problems}
In Model Predictive Control, linear quadratic Optimal
Control Problems (OCPs) are solved in receding horizon
fashion, meaning that at every sampling time the controller
solves problems of the form
\begin{gather}
\label{prob:ocp_problem}
\begin{array}{ll} 
         \underset{{\{x_t\}, \{u_t\}}}{\mbox{minimize}} &\displaystyle \frac{1}{2} x_N^{\top} Q_N x_N + q_N^{\top} x_N + \\
        & \displaystyle\sum_{t=0}^{N-1} \frac{1}{2}\begin{bmatrix}
         u_t \\ x_t
        \end{bmatrix}^{\top} \begin{bmatrix}
         R_t & S_t \\
         S_t^{\top} & Q_t
        \end{bmatrix} \begin{bmatrix}
         u_t \\ x_t
        \end{bmatrix} + \begin{bmatrix}
         r_t \\ q_t
        \end{bmatrix}^{\top} \begin{bmatrix}
        u_t \\ x_t
        \end{bmatrix} \\
        \mbox{subject to} & \quad x_{t+1} = A_t x_t + B_t u_t + w_t \\
        &\quad C^x_t x_t + C^u_t u_t \leq c_t \\
        &\quad C_N^x x_N \leq c_N \\
        &\quad x_0 \text{ given.}
\end{array}
\end{gather}
where $t$-dependent constraints are applied for $t=0,\dots,N-1$. We assume that
$R_t \succ 0$, $Q_N \succeq 0$, and that the stage cost matrices are positive
semi-definite. 
We denote the objective of Problem~\eqref{prob:ocp_problem} by
$J_{\text{LQR}}$. Only the first optimal control variable is applied to the system,
and the rest of the solution is used as a warm start for the following sampling
time.

Problem~\eqref{prob:ocp_problem} can be solved with standard quadratic
programming solvers after formulating it as a quadratic program (QP) in
standard form
\begin{equation}
    \begin{array}{ll} 
        \underset{z}{\mbox{minimize}}  & \frac{1}{2} z^{\top} H z + g^{\top} z \\
        \mbox{subject to}              & \quad Gz \leq b\, .
    \end{array}
\end{equation}
When doing so we can choose to either retain state variables, as in the
\textit{extensive} formulation, or to eliminate them through the dynamical
relationship $x_{t+1} = A_t x_t + B_t u_t + w_t$, leading to the
\textit{condensed} formulation. In the former case, the
problem matrices $H$ and $G$ are sparse, while in the latter case they are dense.
Extensive form QPs can be solved with
sparse QP solvers like OSQP~\cite{BS2020} and Clarabel~\cite{PG2026}, while solvers like
DAQP~\cite{DA2022} and qpOASES~\cite{JF2014} perform the
best on condensed problems.

Alternatively, one may employ dedicated solvers for OCP-structured problems,
like HPIPM~\cite{GF2020} and QPALM-OCP~\cite{KL2024}.
These approaches solve the extensive formulation by
exploiting the specific sparsity pattern of the OCP. This is advantageous
because it allows using dense, rather than general sparse, linear algebra
techniques, which can more effectively utilize the computational resources of
modern hardware. For this reason OCP solvers consistently outperform sparse QP
solvers, especially for moderate to large
numbers of state and control variables. 

The solution approaches described so far apply to linear quadratic OCPs,
however they are often employed as core subroutines of solvers for nonlinear MPC and nonlinear
trajectory optimization. For example, QP solvers are the
computational backbone of the Sequential Quadratic Programming (SQP)
algorithms implemented in Acados~\cite{RV2022}. In an SQP
setting, dedicated OCP solvers have a particular structural advantage: because
intermediate QPs are naturally generated in OCP format, dense solvers pay the extra
overhead of condensing the problem at every iteration.

\subsection{When OCP structure matters}
\label{sec:when_ocp_matters}
The main numerical operations performed by optimization algorithms for QPs and OCPs can broadly be divided into two main classes: linear system factorizations and linear system solves. The former are always needed during solver setup, but may not play a critical role across solver iterations. Instead linear system solves, and related routines with similar floating point operations (flop) counts, dominate the per-iteration cost of AS and First Order methods, but are not particularly relevant for Interior Point algorithms.

OCP solvers can factorize the linear system by running a finite horizon Riccati recursion, which is then used to solve the linear system through a related LQR recursion which computes the optimal state-input sequence~\cite{GF2013}. 

The flop count of a classical Riccati recursion is $\mathcal{C}_f = N(\frac{7}{3}n_x^3 + 4n_x^2n_u + 2n_x n_u^2 + \frac{1}{3}n_u^3)$, while the cost of the LQR recursion is $\mathcal{C}_s = N(4 n_x^2 + 4 n_x n_u + 2n_u^2)$ (see~\cite{GF2013} for more details). Comparing with the corresponding operations generally performed by dense QP solvers, which often rely on the Cholesky decomposition to factorize the linear system, one can establish the following threshold values for $N$, as a function of $\alpha = n_x / n_u$
\begin{equation}
\label{eq:N_thresholds}
\begin{aligned}
N_f &= \sqrt{7\alpha^3 + 12 \alpha^2 + 6 \alpha + 1} \\
N_s &= 2\alpha^2 + 2\alpha + 1 \,.
\end{aligned}
\end{equation}
When $N \geq N_f$ Riccati recursions are cheaper than Cholesky factorizations, while $N \geq N_s$ favors LQR solves over standard linear system solves.

For robotics and other mechanical systems with approximately one input and two state variables per degree of freedom, $\alpha = 2$ is a representative
value, which gives $N_f \approx 10.8$ and $N_s = 13$. Given that typical values of the horizon exceed these thresholds, often by a significant margin~\cite{ED2022},~\cite{MN2017},~\cite{D2021}, robotic applications are a natural fit for OCP-structure exploiting solvers.

Raw flop counts however don't give a complete picture. In fact, Riccati and LQR recursions struggle to achieve high utilization of the computational resources when $n_x$ and $n_u$ are small, due to the lack of sufficient parallelism in the matrix operations. However this is not an issue in the context of whole-body control for robotics, where $n_x$ and $n_u$ often range between 20 and 50~\cite{JPS2021},~\cite{D2021}, and are therefore large enough to make the proposed flops comparison meaningful. 

\subsection{Active Set algorithms}
Active Set (AS) algorithms are particularly attractive
for MPC because they can exploit the similarity between QPs at consecutive
sampling times very effectively. These methods maintain a working set
of constraints treated as equalities, which is updated until the optimality
conditions are satisfied. When the changes in the optimal active set between
sampling times are incremental, warm starting can substantially reduce the number
of iterations~\cite{JF2014, DA2022}. This is a practical advantage over
Interior Point methods, which benefit notoriously little from warm starting. AS
methods can also be competitive without a warm start~\cite{DA2022}, contrary to
First Order methods~\cite{BS2020} which tend to be quite unreliable in this
setting, especially when high solution accuracy is required.

Classical AS methods can be classified as \textit{primal} or \textit{dual},
according to whether they preserve primal or dual feasibility across iterates.
An advantage of dual methods is that the origin is always a feasible initial
point, contrary to primal methods where a Phase I procedure is typically
needed. In part for this reason, dual active set methods often outperform their
primal counterparts~\cite{G1983},~\cite{GC2017}.

Established AS solvers such as qpOASES~\cite{JF2014} and DAQP~\cite{DA2022} are
commonly applied to condensed MPC problems. However, there are also previous
works on AS algorithms for OCPs. In~\cite{DA2006} the Riccati recursion is used
to solve subproblems generated by a dual AS algorithm for QP relaxations
arising in Mixed Integer MPC. Other works explore homotopy-based~\cite{JB2024}
and primal AS methods~\cite{AS2011} for OCP structured problems, respectively.
To the best of our knowledge, no public implementation of these algorithms is
available.

\section{DAOCP}
\label{sec:daocp}
In this section we present the DAOCP algorithm for solving problems of the
form~\eqref{prob:ocp_problem}, beginning with the high level active set
procedure. Here we address a slightly different version of
Problem~\eqref{prob:ocp_problem}, where we explicitly differentiate between
stage-wise equality and inequality constraints, in order to deal with equality
constraints more efficiently. More specifically, we are interested in solving
\begin{gather}
\label{prob:ocp_problem_eqcon}
\begin{array}{ll}
        \underset{\{x_t\}, \{u_t\}}{\mbox{minimize}} &\quad \displaystyle J_{\text{LQR}} \\
       \mbox{subject to} & x_{t+1} = A_t x_t + B_t u_t + w_t \\
       & C^x_t x_t + C^u_t u_t \leq c_t \\
       &D^x_t x_t + D^u_t u_t = d_t \\
       &C_N^x x_N \leq c_N \\
       &x_0 \text{ given,}
\end{array}
\end{gather}
where $x_t \in \mathbb{R}^{n_x}$ and $u_t \in \mathbb{R}^{n_u}$. We denote the
number of inequalities per stage by $m_t$, and the number of equality
constraints per stage by $\rho_t$ (excluding the dynamics), and define $m =
\sum_{t=0}^{N} m_t$, $\rho = \sum_{t=0}^{N-1} \rho_t$. Note that in
Problem~\eqref{prob:ocp_problem_eqcon}, we do not include a terminal equality
constraint of the form $D_N x_N = d_N$. This is because one can implement a procedure based on a similar argument to~\cite{LV2024}, that either detects infeasibility of the equality constraints for
the given $x_0$, or rewrites an OCP with arbitrary stage-wise and terminal
equality constraints in the form of Problem~\eqref{prob:ocp_problem_eqcon},
where the matrices $D_t^u$ are full row-rank at every stage. We describe the procedure in Appendix~\ref{app:eqcon_lqr}.

In order to introduce the dual AS algorithm, it is more convenient to rewrite
Problem~\eqref{prob:ocp_problem_eqcon} in its condensed form, obtained by
eliminating state variables. We use this form only to describe the algorithm,
while the implementation will exploit the original OCP structure without
explicitly condensing the problem. We express state variables as
\begin{equation}
\label{eq:condensing_equation}
\mathbf{x} = \mathcal{A} x_0 + \mathcal{B} \mathbf{u} + \mathbf{w}
\end{equation}
where $\mathbf{x} \in \mathbb{R}^{(N+1)n_x}$ and $\mathbf{u} \in
\mathbb{R}^{n}$, $n=Nn_u$, stack state and control variables over the horizon.
The variable $\mathbf{w}$ collects the accumulated contributions of the
affine dynamics terms $w_t$. Substituting the expression for $\mathbf{x}$ in
the constraints and cost yields the condensed QP
\begin{equation}
\label{prob:condensed_prob}
\begin{array}{ll}
        \underset{\mathbf{u}}{\mbox{minimize}} & \frac{1}{2} \mathbf{u}^{\top} H \mathbf{u} + g^{\top} \mathbf{u} \\
        \mbox{subject to} & G \mathbf{u} \leq b \\
                          & E \mathbf{u} = e \,.
\end{array}
\end{equation}
Under the cost assumptions for Problem~\eqref{prob:ocp_problem}, $H \succ 0$,
and after the equality preprocessing described above $E$ has full row rank.
Introducing dual variables only for the inequality constraints, the Lagrangian
of Problem~\eqref{prob:condensed_prob} is
\begin{equation}
\label{eq:lagrangian}
\mathcal{L}(\mathbf{u}, \lambda) = \frac{1}{2} \mathbf{u}^{\top} H \mathbf{u} + g^{\top} \mathbf{u} + \mathcal{I}_0(E\mathbf{u}-e) + \lambda^{\top}(G\mathbf{u}-b)
\end{equation}
where $\mathcal{I}_0(z)$ is zero if $z=0$ and $+\infty$ otherwise. If
Problem~\eqref{prob:condensed_prob} is feasible, strong duality for convex QPs
with affine constraints allows us to recover its solution from the dual
\begin{equation}
\label{prob:dual_prob}
\begin{array}{ll}
        \underset{\lambda \geq 0}{\mbox{maximize}} & \min_{\mathbf{u}} \mathcal{L}(\mathbf{u}, \lambda).
\end{array}
\end{equation}
From the equality-constrained QP optimality conditions and the block
inverse formula~\cite[Sec.~10.1.1 and App.~A.5.5]{BV2004}, we find that
Problem~\eqref{prob:dual_prob} is equivalent to the convex QP
\begin{equation}
\label{prob:dual_prob_qp}
\begin{array}{ll}
        \underset{\lambda \geq 0}{\mbox{minimize}} & \displaystyle \frac{1}{2} \lambda^{\top} \mathcal{H} \lambda + f^{\top} \lambda \,.
\end{array}
\end{equation}
where
\begin{align}
\label{eq:dual_hessian}
\mathcal{H} &= G\left(H^{-1} - H^{-1}E^{\top}(EH^{-1}E^{\top})^{-1} EH^{-1}\right)G^{\top} \\
f &= b - G \mathbf{u}_0 \tag*{}
\end{align}
with $\mathbf{u}_0 = \arg \min_{\mathbf{u}} \mathcal{L}(\mathbf{u}, 0)$. Note
that the matrix in parentheses in~\eqref{eq:dual_hessian} is the upper-left
block of the inverse equality-constrained KKT matrix. 

Algorithm~\ref{algo:AS_algorithm} describes the AS algorithm employed by DAOCP.
A working set $\mathcal{W} \subseteq \{1,\dots,m\}$ represents an equality
constrained QP obtained from~\eqref{prob:dual_prob_qp}, where all $\lambda_i$
with $i\in \{1,\dots,m\}\setminus \mathcal{W}$ are constrained through
$\lambda_i = 0$, while the remaining components of $\lambda$ are free. Note
that we are using the convention of representing the working set through
constraints that are active in the primal problem, as in~\cite{G1983}. We
denote by $\mathcal{H}_{\mathcal{W}}$ the matrix obtained from the rows and
columns of $\mathcal{H}$ with index in $\mathcal{W}$. Similarly,
$\lambda_{\mathcal{W}}$ denotes the vector of components of $\lambda$
indexed by $\mathcal{W}$. 

\begin{algorithm}[t]
\caption{The Dual Active Set algorithm. DAOCP exploits OCP structure in the operations colored in blue.}
\label{algo:AS_algorithm}
\footnotesize
\begin{algorithmic}[1]
\State $\mathcal{W}$, $\lambda$ given.
\Loop
    \If{$\mathcal{H}_{\mathcal{W}}$ is nonsingular}
        \State Solve \textcolor{blue}{$\mathcal{H}_{\mathcal{W}}\nu=-f_{\mathcal{W}}$}
        \If{$\nu \geq 0$}
            \State $\lambda_{\mathcal{W}}\gets\nu$;
                \textcolor{blue}{$\mathbf{u}\gets\arg\min_{v}
                \mathcal{L}(v,\lambda)$}
            \State \textbf{if} \textcolor{blue}{$G\mathbf{u}\leq b$}
                \textbf{ then \Return} $\mathbf{u}$
            \State $i\gets\arg\max_{j\notin\mathcal{W}}
                (G\mathbf{u}-b)_j$
            \State $\mathcal{W}\gets\mathcal{W}\cup\{i\}$
        \Else
            \State $d\gets\nu-\lambda_{\mathcal{W}}$
            \State $i\gets\arg\min_{j\in\mathcal{W}:\,d_j<0}
                \left(-\lambda_j/d_j\right)$
            \State $t\gets-\lambda_i/d_i$;
                $\lambda_{\mathcal{W}}\gets\lambda_{\mathcal{W}}+td$;
                $\mathcal{W}\gets\mathcal{W}\setminus\{i\}$
        \EndIf
    \Else
        \State Find $p$ such that \textcolor{blue}{$\mathcal{H}_{\mathcal{W}}p=0$}
            and $p^{\top}f_{\mathcal{W}}<0$
        \State \textbf{if} $p\geq0$
            \textbf{ then \Return} \textsc{infeasible}
        \State $i\gets\arg\min_{j\in\mathcal{W}:\,p_j<0}
            \left(-\lambda_j/p_j\right)$
        \State $t\gets-\lambda_i/p_i$;
            $\lambda_{\mathcal{W}}\gets\lambda_{\mathcal{W}}+tp$;
            $\mathcal{W}\gets\mathcal{W}\setminus\{i\}$
    \EndIf
\EndLoop
\end{algorithmic}
\end{algorithm}
Algorithm~\ref{algo:AS_algorithm} is essentially identical to the algorithm
employed by DAQP, with the equality constraints eliminated here before applying
the active-set procedure. We refer to~\cite{DA2022,G1983} for the correctness
and convergence analysis, including the treatment of degeneracy.

\subsection{Key implementation aspects}
In this subsection we detail the key operations in
Algorithm~\ref{algo:AS_algorithm} which remain implicit in its mathematical
formulation.

\textit{1) Solving unconstrained QPs:} in order to solve the linear systems for
$\nu$ and $p$ in Algorithm~\ref{algo:AS_algorithm}, we maintain a Cholesky
factorization of $\mathcal{H}_{\mathcal{W}}$. When a constraint is added, we
extend the factor by one row, while when one is removed we delete the
corresponding row and column and perform a rank-one update. DAQP uses analogous
updates to an $LDL^{\top}$ factorization.

Note that in order to maintain the Cholesky factor of $\mathcal{H}_{\mathcal{W}}$ we do not need knowledge of the full $\mathcal{H}$. Instead we only need to be able to compute the entries of $\mathcal{H}$ that constitute a new row of $\mathcal{H}_{\mathcal{W}}$, after a constraint is added to $\mathcal{W}$.

To this end, DAQP computes a Gramian
factorization of $\mathcal{H}$ during solver setup. For the
inequality-only formulation in~\cite{DA2022}, $\mathcal{H} = GH^{-1}G^{\top}$,
and the chosen factorization is $\mathcal{H} = MM^{\top}$ with $M = G
L_H^{-\top}$, where $L_H = \operatorname*{chol}(H)$. A Gramian factorization is
useful because it allows us to compute new rows of
$\mathcal{H}_{\mathcal{W}}$ by simply evaluating a matrix-vector product. 

DAOCP instead employs a \textit{signed} Gramian factorization of the form
\begin{equation}
\label{eq:signed_gramian} 
\mathcal{H} = M_u M_u^{\top} - M_{\xi} M_{\xi}^{\top} \,.
\end{equation}
Moreover, we do not compute the matrices $M_u$ and
$M_{\xi}$ in full during solver setup. This is due to the fact that we
target nonlinear or time-varying MPC applications, where changes in the problem
matrices prevent reuse across sampling times. We therefore instead
compute only the rows of $M_u$ and $M_{\xi}$ that are actually needed, while the algorithm is running.

The following Proposition characterizes the signed Gramian factorization that we employ.

\begin{proposition}
\label{prop:signed_gramian}
Let 
\begin{equation}
\label{eq:KKT_matrix}
\mathcal{K} = \begin{bmatrix}
        H & E^{\top} \\
        E & 0
\end{bmatrix}
\end{equation}
be the KKT matrix associated to the equality constraints of Problem~\eqref{prob:condensed_prob}, and let $L$ be any matrix such that
\[
\mathcal{K} = L \operatorname*{blockdiag}(I_n, - I_{\rho}) L^{\top} \,.
\]
Then the dual Hessian $\mathcal{H}$ admits the signed Gramian factorization
\[
\mathcal{H} = M_u M_u^{\top} - M_{\xi} M_{\xi}^{\top} \,,
\]
where $M_u$ and $M_{\xi}$ solve the linear system
\begin{equation}
\label{eq:MuMxi_linear_system}
L \begin{bmatrix}
        M_u^{\top} \\
        M_{\xi}^{\top}
\end{bmatrix} = \begin{bmatrix}
        G^{\top} \\
        0
\end{bmatrix} \,.
\end{equation}
\end{proposition}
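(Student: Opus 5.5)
The plan is to express $\mathcal{H}$ through the inverse of the KKT matrix $\mathcal{K}$ and then substitute the signed factorization of $\mathcal{K}$. Write $S = \operatorname*{blockdiag}(I_n,-I_\rho)$. By the block inverse formula already invoked for~\eqref{eq:dual_hessian}, $\mathcal{K}$ is nonsingular because $H\succ 0$ and $E$ has full row rank. Its inverse has upper-left block
\[
P = H^{-1} - H^{-1}E^{\top}(EH^{-1}E^{\top})^{-1}EH^{-1}.
\]
Hence
\[
\mathcal{H} = GPG^{\top} = \begin{bmatrix} G & 0\end{bmatrix}\mathcal{K}^{-1}\begin{bmatrix} G^{\top} \\ 0\end{bmatrix}.
\]
This is the first step, and it only restates the remark after~\eqref{eq:dual_hessian}.

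Next I use the factorization $\mathcal{K} = LSL^{\top}$. Since $\mathcal{K}$ and $S$ are nonsingular, $L$ must be square, of size $n+\rho$, and nonsingular: $\det\mathcal{K} = \pm(\det L)^2 \neq 0$. Note that $L$ is square because $\operatorname{blockdiag}(I_n,-I_\rho)$ is $(n+\rho)\times(n+\rho)$ and $\mathcal{K}$ has the same size. Thus the system~\eqref{eq:MuMxi_linear_system} has a unique solution. Define it as
\[
\begin{bmatrix} M_u^{\top} \\ M_\xi^{\top}\end{bmatrix} = L^{-1}\begin{bmatrix} G^{\top}\\ 0\end{bmatrix}.
\]
Using $S^{-1}=S$, we get $\mathcal{K}^{-1} = L^{-\top} S L^{-1}$.

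Substituting gives
\[
\mathcal{H} = \Big(L^{-1}\begin{bmatrix} G^{\top}\\0\end{bmatrix}\Big)^{\top} S \Big(L^{-1}\begin{bmatrix} G^{\top}\\0\end{bmatrix}\Big) = \begin{bmatrix} M_u & M_\xi\end{bmatrix} S \begin{bmatrix} M_u^{\top}\\ M_\xi^{\top}\end{bmatrix},
\]
which expands to $M_uM_u^{\top} - M_\xi M_\xi^{\top}$.

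The main point needing care is the invertibility of $\mathcal{K}$ and the identification of its upper-left inverse block with $P$. I would handle this by a direct check rather than by citation. Setting $X = \begin{bmatrix} P & H^{-1}E^{\top}T \\ TEH^{-1} & -T\end{bmatrix}$ with $T=(EH^{-1}E^{\top})^{-1}$, I would verify that $\mathcal{K}X = I$. The invertibility of $T$ follows from $H^{-1}\succ0$ and $E$ having full row rank, as guaranteed by the equality preprocessing. Everything else is direct algebra.
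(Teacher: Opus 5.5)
Your proof is correct and follows the paper's argument. The paper's one-line proof is exactly the observation that $\mathcal{H}$ is the upper-left block of $\mathcal{K}^{-1}$ multiplied by $G$ on the left and $G^{\top}$ on the right. You also supply the details the paper leaves implicit: that $L$ is square and nonsingular, the identity $\mathcal{K}^{-1} = L^{-\top}\operatorname*{blockdiag}(I_n,-I_{\rho})L^{-1}$, and an explicit check of the block inverse.
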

\begin{proof}
The statement follows from the observation that $\mathcal{H}$ is the top-left
block of $\mathcal{K}^{-1}$, multiplied by $G$ on the left and $G^{\top}$ on
the right.
\end{proof}
The left factor $L$ of the KKT matrix $\mathcal{K}$ defining our Gramian
factorization may be chosen in different ways. For example, one option is to
compute $L$ through an LDL factorization of $\mathcal{K}$. Later we describe
how $L$ can be selected so that linear systems in the form of
Equation~\eqref{eq:MuMxi_linear_system} can be solved efficiently.

\textit{2) Computing primal candidates:} another key operation performed in
Algorithm~\ref{algo:AS_algorithm} is the retrieval of the primal candidate
associated to the current dual multiplier $\lambda$, which we need in order to
compute the residuals $b - G\mathbf{u}$, to determine whether a constraint
should be added to the working set. While one could retrieve $\mathbf{u}$ by
directly minimizing the Lagrangian, we choose to do that through the
relationship established by the following proposition.
\begin{proposition}
Let $\mathcal{K}$, $L$, $M_u$, $M_{\xi}$ be defined as in
Proposition~\ref{prop:signed_gramian}. For a pair $\left(\mathcal{W},
\lambda\right)$ generated by Algorithm~\ref{algo:AS_algorithm}, the minimizer
is $\mathbf{u}=\mathbf{u}_0+\Delta\mathbf{u}$, with $\Delta \mathbf{u}$ solving
the linear system
\begin{equation}
\label{eq:primal_candidate}
L^{\top} \begin{bmatrix}\Delta\mathbf{u} \\ \Delta\boldsymbol{\xi}\end{bmatrix} = \begin{bmatrix}
-M_{u,\mathcal{W}}^{\top} \\
M_{\xi, \mathcal{W}}^{\top}
\end{bmatrix} \lambda_{\mathcal{W}} \,,
\end{equation}
for some $\Delta\boldsymbol{\xi}$.
\end{proposition}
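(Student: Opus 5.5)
The plan is to write the minimizer of $\mathcal{L}(\cdot,\lambda)$ through the equality-constrained KKT system and then move the factor $L$ from one side to the other. Since $\lambda_i=0$ for $i\notin\mathcal{W}$ (an invariant of Algorithm~\ref{algo:AS_algorithm}), we have $G^{\top}\lambda = G_{\mathcal{W}}^{\top}\lambda_{\mathcal{W}}$. Minimizing $\frac12\mathbf{u}^{\top}H\mathbf{u}+(g+G^{\top}\lambda)^{\top}\mathbf{u}$ subject to $E\mathbf{u}=e$ is therefore equivalent, because $H\succ0$ and $E$ has full row rank, to the nonsingular system
\[
\mathcal{K}\begin{bmatrix}\mathbf{u}\\ \mu\end{bmatrix}=\begin{bmatrix}-g-G_{\mathcal{W}}^{\top}\lambda_{\mathcal{W}}\\ e\end{bmatrix}.
\]
By linearity, $\mathbf{u}=\mathbf{u}_0+\Delta\mathbf{u}$, where $(\mathbf{u}_0,\mu_0)$ solves the system with $\lambda=0$. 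The increment satisfies
\[
\mathcal{K}\begin{bmatrix}\Delta\mathbf{u}\\ \Delta\mu\end{bmatrix}=-\begin{bmatrix}G_{\mathcal{W}}^{\top}\\0\end{bmatrix}\lambda_{\mathcal{W}}.
\]

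Next, substitute $\mathcal{K}=LSL^{\top}$ with $S=\operatorname{blockdiag}(I_n,-I_\rho)$. We need $L$ to be invertible, and this follows from the nonsingularity of $\mathcal{K}$. The right-hand side is rewritten using \eqref{eq:MuMxi_linear_system}: restricting that system to the columns indexed by $\mathcal{W}$ gives
\[
\begin{bmatrix}G_{\mathcal{W}}^{\top}\\0\end{bmatrix}=L\begin{bmatrix}M_{u,\mathcal{W}}^{\top}\\ M_{\xi,\mathcal{W}}^{\top}\end{bmatrix},
\]
where $M_{u,\mathcal{W}}$ denotes the rows of $M_u$ indexed by $\mathcal{W}$. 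Cancelling $L$ on the left yields
\[
SL^{\top}\begin{bmatrix}\Delta\mathbf{u}\\ \Delta\mu\end{bmatrix}=-\begin{bmatrix}M_{u,\mathcal{W}}^{\top}\\ M_{\xi,\mathcal{W}}^{\top}\end{bmatrix}\lambda_{\mathcal{W}}.
\]
Multiplying by $S=S^{-1}$ then flips the sign of the lower block, which gives exactly \eqref{eq:primal_candidate} with $\Delta\boldsymbol{\xi}=\Delta\mu$.

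The only point needing care is the bookkeeping rather than any real obstacle. First, we need to justify that $\lambda$ is supported on $\mathcal{W}$ for the pairs the algorithm produces. This holds at initialization (or is assumed of the given warm start), and each update either sets $\lambda_{\mathcal{W}}$ or removes an index $i$ whose multiplier has just been driven to $0$. Second, we must check that the partition of $L$'s rows and columns matches the $(n,\rho)$ split of $S$, so that the sign flip lands on the $M_\xi$ block.
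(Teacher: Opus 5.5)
Your proof is correct and takes the same approach as the paper, whose one-line proof just says the result follows from the definitions of $\mathcal{L}$, $L$, $M_u$ and $M_\xi$; you supply exactly those steps (the KKT system for minimizing the Lagrangian, linearity in $\lambda$, substituting $\mathcal{K}=L\operatorname{blockdiag}(I_n,-I_\rho)L^{\top}$ and the defining system for $M_u, M_\xi$, cancelling $L$ and applying the sign matrix). The two bookkeeping points you flag, that $\lambda$ vanishes off $\mathcal{W}$ and that $L$ is invertible because $\mathcal{K}$ is nonsingular, are both handled correctly.
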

\begin{proof}
The statement follows from the definitions of $\mathcal{L}$, $L$, $M_{u}$ and $M_{\xi}$.
\end{proof}
\textit{3) Constraint selection:} once the primal candidate is computed, we
scan the residual $b - G\mathbf{u}$ to determine which constraint should be
added to the working set (if any). Classically, one may employ one of two
possible heuristics: either all residuals are evaluated and the constraint with
maximum violation is added, or instead we can adopt a \textit{greedy} approach,
where we stop as soon as we find a violated constraint. The most-violated rule
can reduce the number of iterations, while the greedy rule can reduce the cost
of constraint selection. Their relative performance depends on the problem.

For simplicity, we present results for the
\textit{most-violated} heuristic, which is the one chosen by DAQP, but our implementation allows the user to
select which strategy to adopt.

\subsection{Exploiting OCP structure}
The algorithm described so far applies to a generic QP of the
form~\eqref{prob:condensed_prob}. Now instead we describe how to rely on OCP structure to implement its main operations efficiently.

The key observation is that solving linear systems with the KKT matrix
$\mathcal{K}$ in~\eqref{eq:KKT_matrix} is equivalent to solving finite-horizon
LQR problems with stage-wise equality constraints. In particular, for the
special case of a condensed QP with no equality constraints, where $L$ is the
Cholesky factor of $H$, \cite{GF2013_TRCHOL} established that solving $L x = b$
is equivalent to running the backward co-state LQR recursion~\cite[Algorithm 2]{GF2013}, while solving $L^{\top}y = x$ is equivalent
to running the forward LQR recursion~\cite[Algorithm 2]{GF2013}. Here
we generalize this result to a signed Cholesky, or LDL, factorization of
$\mathcal{K}$.

Before introducing our main results, we define a specific form of the Riccati
recursion, which is associated to OCPs in the form of
Problem~\eqref{prob:ocp_problem_eqcon}. This recursion, which is tightly related
to the one in~\cite{LV2024}, follows the classical dynamic programming argument,
with the cost-to-go minimization subject to
$D_t^x x_t + D_t^u u_t = d_t$. These equality constraints can be satisfied for
any value of $x_t$, since $D_t^u$ is full row rank by assumption. In
particular, we introduce the quantities
\begin{equation}
\label{eq:eqcon_lqr_Lambda}
\Lambda_t = \operatorname*{signedchol} \left(\begin{bmatrix}
        R_t + B_t^{\top} P_{t+1} B_t & (D^u_t)^{\top} \\
        D_t^u & 0
\end{bmatrix}\right)
\end{equation}
\begin{equation}
\label{eq:eqcon_lqr_K}        
\begin{bmatrix}K_t^u & K_t^{\xi}\end{bmatrix}
= \begin{bmatrix}S_t^{\top} + A_t^{\top} P_{t+1} B_t & (D_t^x)^{\top}\end{bmatrix}
\Lambda_t^{-\top}
\end{equation}
\begin{equation}
\label{eq:eqcon_lqr_riccati}
P_t = Q_t + A_t^{\top}P_{t+1}A_t - K^u_t(K_t^u)^{\top} + K_t^{\xi}(K_t^{\xi})^{\top}
\end{equation}
where $P_N = Q_N$. The $\operatorname*{signedchol}(\cdot)$ operator is
defined to take as input the KKT matrix of a
strictly convex, equality constrained QP
\[
\begin{aligned}
\operatorname*{minimize}_z \, \frac{1}{2} z^{\top} \mathcal{Q} z + q^{\top} z \quad
\text{s.t.}\quad \mathcal{E} z = b
\end{aligned}
\]
with full row rank $\mathcal{E}$, and return a matrix
factor $\Lambda$ such that 
\begin{equation}
\label{eq:singed_cholesky}
\Lambda = \begin{bmatrix}
        \operatorname*{chol}(\mathcal{Q}) & 0 \\
        \mathcal{E}\operatorname*{chol}(\mathcal{Q})^{-\top} & \operatorname*{chol}(\mathcal{E} \mathcal{Q}^{-1} \mathcal{E}^{\top})
\end{bmatrix} \,.
\end{equation}

Recall that the main use of Equation~\eqref{eq:signed_gramian} is to compute
the row of $\mathcal{H}_{\mathcal{W}}$ associated to a newly added constraint. Assume
that this is inequality constraint $j$ at time step $t$. This in turn requires
computing the corresponding rows of $M_u$ and $M_{\xi}$, which we denote by
$m_u$ and $m_{\xi}$, respectively. The following theorem states
that by suitably choosing $L$ in Proposition~\ref{prop:signed_gramian},
$m_u$ and $m_{\xi}$ can be computed by running a generalization of the
classical backward LQR recursion.

\begin{theorem}
\label{thm:MuMxi_rows_computation}
Let $(C_t^u)_j$ and $(C_t^x)_j$ be the rows of $C_t^u$ and $C_t^x$
corresponding to the newly added constraint. Partition $m_u$ and $m_{\xi}$ as
\[
m_u = \begin{bmatrix}
        y_{0}^{\top} & y_1^{\top} & \dots & y_{N-1}^{\top}
\end{bmatrix}^{\top}
\]
\[
m_{\xi} = \begin{bmatrix}
        v_{0}^{\top} & v_1^{\top} & \dots & v_{N-1}^{\top}
\end{bmatrix}^{\top}
\]
where $y_t \in \mathbb{R}^{n_u}$ and $v_t \in \mathbb{R}^{\rho_t}$. There
exists a factorization of the KKT matrix~\eqref{eq:KKT_matrix} of the form
\[
\mathcal{K} = L \operatorname*{blockdiag}(I_n, -I_{\rho}) L^{\top}
\]
such that the linear system~\eqref{eq:MuMxi_linear_system} defining $m_u$ and
$m_{\xi}$ can be solved through the following recursion
\begin{equation}
\label{eq:dual_recursion}
\begin{aligned}
\begin{bmatrix}y_t \\ v_t\end{bmatrix}
&= \Lambda_t^{-1}\begin{bmatrix}(C_t^u)_j \\ 0\end{bmatrix},\\
p_t &= (C_t^x)_j - K_t^u y_t + K_t^{\xi} v_t .
\end{aligned}
\end{equation}
\begin{equation}
\label{eq:thm1_backward_rec}
\begin{aligned}
\begin{bmatrix}y_{\tau} \\ v_{\tau}\end{bmatrix}
&= \Lambda_{\tau}^{-1}\begin{bmatrix}B_{\tau}^{\top}p_{\tau+1} \\ 0\end{bmatrix},\\
p_{\tau} &= A^{\top}_{\tau}p_{\tau+1} - K_{\tau}^u y_{\tau} + K_{\tau}^{\xi}v_{\tau}
\end{aligned}
\end{equation}
for $\tau=t-1,\dots,0$, while $v_{\tau}=0$ and $y_{\tau}=0$ for $\tau>t$. For a
terminal constraint ($t=N$), initialize $p_N=(C_N^x)_j$ and apply~\eqref{eq:thm1_backward_rec} for $\tau=N-1,\dots,0$.
\end{theorem}
\begin{proof}
The theorem is proved in Appendix~\ref{app:proof}. The proof generalizes the
argument in~\cite{GF2013_TRCHOL} to equality constrained QPs.
\end{proof}
A similar result applies to the problem of solving the linear system that
defines primal candidates in Equation~\eqref{eq:primal_candidate}. In this
case, the system can be solved by running a generalization of the forward LQR
recursion.
\begin{theorem}
\label{thm:primal_candidate}
For an arbitrary working set $\mathcal{W}$ and dual multiplier $\lambda$, partition
\[
M_{u,\mathcal{W}}^{\top} \lambda_{\mathcal{W}} = \begin{bmatrix}
        \delta u_{0}^{\top} & \delta u_1^{\top} & \dots & \delta u_{N-1}^{\top}
\end{bmatrix}^{\top}
\]
\[
M_{\xi,\mathcal{W}}^{\top} \lambda_{\mathcal{W}} = \begin{bmatrix}
        \delta \xi_{0}^{\top} & \delta \xi_1^{\top} & \dots & \delta \xi_{N-1}^{\top}
\end{bmatrix}^{\top} \,.
\]
Then the primal candidate $\mathbf{u} = \arg \min_v \mathcal{L}(v, \lambda)$
can be written as $\mathbf{u} = \mathbf{u}_0 + \Delta \mathbf{u}$, where the
time-indexed blocks of $\Delta \mathbf{u}$ are obtained from the following
recursion
\begin{equation}
\label{eq:primal_recusion}
\begin{aligned}
\begin{bmatrix}
        \Delta u_t \\ \Delta \xi_t
\end{bmatrix} &= \Lambda_t^{-\top} \begin{bmatrix}
        -(K_t^u)^{\top} \Delta x_t - \delta u_t \\
        (K_t^{\xi})^{\top} \Delta x_t + \delta \xi_t
\end{bmatrix} \\
\Delta x_{t+1} &= A_t \Delta x_t + B_t \Delta u_t 
\end{aligned}
\end{equation}
with $\Delta x_0 = 0$.
\end{theorem}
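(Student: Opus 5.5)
The plan is to use the same factor $L$ that Theorem~\ref{thm:MuMxi_rows_computation} constructs, and to show that back-substitution with $L^{\top}$ unrolls into the forward recursion~\eqref{eq:primal_recusion}. By the previous proposition, $\Delta\mathbf{u}$ solves~\eqref{eq:primal_candidate}, whose right-hand side is $[-(M_{u,\mathcal{W}}^{\top}\lambda_{\mathcal{W}})^{\top},\,(M_{\xi,\mathcal{W}}^{\top}\lambda_{\mathcal{W}})^{\top}]^{\top}$. Its time blocks are exactly $(-\delta u_t,\delta\xi_t)$, so the whole task is to describe the action of $L^{\top}$ blockwise.

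The first step is to make $L$ explicit from the proof of Theorem~\ref{thm:MuMxi_rows_computation}. Order the variables stage by stage as $(u_0,\xi_0,u_1,\xi_1,\dots)$. That theorem shows $L$ is block lower triangular, which is what lets $LM^{\top}=[G^{\top};0]$ be solved backward in time. Its stage-$t$ diagonal block is $\Lambda_t$. Its off-diagonal block row $t$, column $s$ ($t>s$) is $\begin{bmatrix}B_t^{\top}\\0\end{bmatrix}$-type terms times $\Phi(t,s)^{\top}$-like propagators. I will read these off by matching the recursion~\eqref{eq:thm1_backward_rec}: the transfer $p_\tau\mapsto p_{\tau-1}$ is $A_\tau^{\top}p-K^u_\tau y+K^\xi_\tau v$. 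Writing $L=\mathcal{D}+\mathcal{N}$, the claim to extract is that the $(s,t)$ block of $L$ for $s>t$ is $\begin{bmatrix}B_s^{\top}\\0\end{bmatrix}\Psi_{s,t+1}\begin{bmatrix}K^u_t & -K^\xi_t\end{bmatrix}$-like. Here $\Psi$ is the closed-loop propagator built from $A_\tau - B_\tau(\ldots)$. If the appendix proof does not state $L$ in closed form, I would rederive it directly. That amounts to an induction over $N$ showing $\mathcal{K}=L\,\mathrm{blockdiag}(I,-I)L^{\top}$ via the Riccati update~\eqref{eq:eqcon_lqr_riccati}, using the identity $\Lambda_t\,\mathrm{blockdiag}(I,-I)\Lambda_t^{\top}$ equal to the stage KKT block.

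Rather than inverting $L^{\top}$ entrywise, I would verify the claim. Define $\Delta x_t$ by the forward recursion, and show that the vector $z$ produced by~\eqref{eq:primal_recusion} satisfies $L^{\top}z=r$. Block row $t$ of $L^{\top}z$ equals $\Lambda_t^{\top}[\Delta u_t;\Delta\xi_t]$ plus the contributions of later stages $s>t$. Those contributions should collapse, by the forward dynamics $\Delta x_{t+1}=A_t\Delta x_t+B_t\Delta u_t$ and the definition~\eqref{eq:eqcon_lqr_K} of $K_t$, to $[(K^u_t)^{\top}\Delta x_t;\,-(K^\xi_t)^{\top}\Delta x_t]$. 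Rearranged, that is exactly the first line of~\eqref{eq:primal_recusion}. Equivalently, one can note that solving $L^{\top}z=r$ is the transpose of the backward recursion of Theorem~\ref{thm:MuMxi_rows_computation}. The backward map $[G^{\top};0]\mapsto M^{\top}$ is a composition of stage operators. Transposing that composition reverses time and turns $p_{\tau}=A_\tau^{\top}p_{\tau+1}+\dots$ into $\Delta x_{\tau+1}=A_\tau\Delta x_\tau+B_\tau\Delta u_\tau$. The minus sign on $K^\xi$ comes from the signature matrix. I will probably present the argument this way, since it avoids writing $L$ in full.

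The main obstacle is the signs. The signature $\mathrm{blockdiag}(I,-I)$ and the sign conventions in~\eqref{eq:primal_candidate} (the $-M_u$ versus $+M_\xi$) must combine so that the $\xi$-row gets $+(K^\xi_t)^{\top}\Delta x_t+\delta\xi_t$. I would check this carefully on the single-stage case $N=1$ before doing the induction. Finally, $\Delta x_0=0$ holds because $x_0$ is fixed, so perturbations in $\mathbf{u}$ do not affect it; this matches the condensing~\eqref{eq:condensing_equation}.
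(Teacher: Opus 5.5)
Your overall strategy, which is to reuse the factor $L$ of Theorem~\ref{thm:MuMxi_rows_computation} and show that substitution with $L^{\top}$ unrolls into~\eqref{eq:primal_recusion}, is what the paper means by ``a similar argument.'' However, the structure of $L$ you plan to verify against is wrong, and with it your key collapse step fails.

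The appendix gives $L$ in closed form in \eqref{eq:L_structure}. In the forward stage ordering you chose, $L=\Pi^{\top}\bar L\Pi$ is block \emph{upper} triangular; this, not lower triangularity, is why $Lm=g$ is solved backward in time. For $s<t$, its $(s,t)$ block is $\begin{bmatrix}B_s^{\top}\\ 0\end{bmatrix}A_{s+1}^{\top}\cdots A_{t-1}^{\top}\begin{bmatrix}K_t^u & -K_t^{\xi}\end{bmatrix}$. This uses the \emph{open-loop} product of the $A$'s, with $B$ from the earlier stage and $K$ from the later one. The closed-loop propagator you describe is what you read off from~\eqref{eq:thm1_backward_rec} after eliminating $y_\tau,v_\tau$. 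That describes $L^{-1}$, not $L$; feedback enters $L$ only through $P_{t+1}$ inside $K_t$.

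Hence $L^{\top}$ is block lower triangular, and block row $t$ of $L^{\top}z$ receives contributions from \emph{earlier} stages $s<t$. Since the bottom block row of $L_{s,t}$ vanishes, $\Delta\xi_s$ drops out and
\[
\sum_{s<t}L_{s,t}^{\top}z_s=\begin{bmatrix}(K_t^u)^{\top}\\ -(K_t^{\xi})^{\top}\end{bmatrix}\sum_{s<t}A_{t-1}\cdots A_{s+1}B_s\Delta u_s=\begin{bmatrix}(K_t^u)^{\top}\Delta x_t\\ -(K_t^{\xi})^{\top}\Delta x_t\end{bmatrix}.
\]
With your version (contributions from $s>t$ through a closed-loop $\Psi$), the sum involves future inputs. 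It cannot equal $\Delta x_t$, which depends only on $\Delta u_0,\dots,\Delta u_{t-1}$. Using~\eqref{eq:L_structure}, the display above is the whole verification. The sign on $K^{\xi}$ is the $-K_t^{\xi}$ already present in $\bar L_{s,t}$, inherited from $\Sigma_t^{-1}=J_t$. The condition $\Delta x_0=0$ is just the empty sum at $t=0$.

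Your alternative ``transpose the backward recursion'' argument can become a valid and genuinely different proof, but not as phrased. Theorem~\ref{thm:MuMxi_rows_computation} computes $L^{-1}$ only on vectors of the form $[G^{\top};0]$, so transposing the map $[G^{\top};0]\mapsto M^{\top}$ does not by itself give $L^{-\top}$.

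What works is to take the linear map $T:(\mathbf c_u,\mathbf c_x)\mapsto L^{-1}\begin{bmatrix}\mathbf c_u+\mathcal B^{\top}\mathbf c_x\\ 0\end{bmatrix}$. By superposition, $T$ is the backward recursion with inputs $(c_{u,\tau},c_{x,\tau})$ injected at every stage. Then $T^{\top}=\begin{bmatrix}I&0\\ \mathcal B&0\end{bmatrix}L^{-\top}$ sends the right-hand side $r$ of~\eqref{eq:primal_candidate} to $(\Delta\mathbf u,\Delta\mathbf x)$. The adjoint identity $\langle r,Tc\rangle=\langle(\Delta\mathbf u,\Delta\mathbf x),c\rangle$ follows by telescoping $\Delta x_{\tau+1}^{\top}p_{\tau+1}-\Delta x_\tau^{\top}p_\tau$, using $\Delta x_0=0$ and $p_N=c_{x,N}$.

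This route treats Theorem~\ref{thm:MuMxi_rows_computation} as a black box and makes precise your intuition that $p_t$ is the adjoint of $\Delta x_t$. It certifies only the $\mathbf u$-part of $L^{-\top}r$, which is all the statement requires. The direct check via~\eqref{eq:L_structure} is shorter.
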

\begin{proof}
The statement follows from a similar argument as in the proof of Theorem~\ref{thm:MuMxi_rows_computation}.
\end{proof}
\begin{remark}
Notice that when computing a primal candidate satisfying
Equation~\eqref{eq:primal_candidate} through the recursion stated by
Theorem~\ref{thm:primal_candidate}, we must also compute the corresponding
state sequence. This allows us to evaluate the constraint residuals $b -
G\mathbf{u}$ without having to form $b$ or $G$.
\end{remark}

Theorems~\ref{thm:MuMxi_rows_computation} and~\ref{thm:primal_candidate}
together imply that we do not need to form the condensed
Problem~\eqref{prob:condensed_prob}, since all operations in
Algorithm~\ref{algo:AS_algorithm} can be expressed in terms of the original OCP
and the associated Riccati recursion.

\subsection{Analyzing complexity}
The most costly operations performed by the algorithm are
recursions~\eqref{eq:dual_recursion}--\eqref{eq:primal_recusion}, as well as
the linear system solves involving $\mathcal{H}_{\mathcal{W}}$ as coefficient
matrix, and the matrix vector products $M_{u,\mathcal{W}}^{\top} \lambda_{\mathcal{W}}$,
$M_{\xi, \mathcal{W}}^{\top} \lambda_{\mathcal{W}}$. Overall the iteration cost is
\begin{equation}
\label{eq:c_iter}
\mathcal{C}_{\text{iter}} = \mathcal{C}_{\text{AS}} + \mathcal{C}_{\text{LQR}} + \mathcal{C}_{\text{scan}}\,,
\end{equation}
where
\begin{equation}
\label{eq:c_lqr}
\begin{aligned}
\mathcal{C}_{\text{AS}} &= \mathcal{O}\left(|\mathcal{W}|^2 + |\mathcal{W}|(\rho + Nn_u)\right) \\
\mathcal{C}_{\text{LQR}} &= \mathcal{O}\left(N(n_x^2+n_u^2) + \sum_{t=0}^{N-1}\rho_t^2\right) \\
\mathcal{C}_{\text{scan}} &= \mathcal{O}\left(m(n_x + n_u)\right) \,.
\end{aligned}
\end{equation}
The quadratic scaling of $\mathcal{C}_{\text{AS}}$ with $|\mathcal{W}|$ means
that iterations may become quite expensive when the working set is large. This
is the crucial reason to exclude equality constraints from the AS algorithm: if
we simply keep all equalities in the working set, then at every
iteration $\mathcal{C}_{\text{AS}} \geq \rho^2$. Instead, with our approach, the
quadratic cost of equality constraints is spread through the horizon as per
Equation~\eqref{eq:c_lqr} (notice that $\rho^2 \geq \sum_{t=0}^{N-1}\rho_t^2$).

Nevertheless, the quadratic scaling of $\mathcal{C}_{\text{AS}}$ with $|\mathcal{W}|$
implies that DAOCP is optimized for problems where the working set does not
grow too large. When this design assumption fails to hold, it becomes more
convenient to formulate dual subproblems as equality constrained LQR problems,
implying that the cost term which is quadratic in $|\mathcal{W}|$
becomes $\mathcal{O}(\sum_{t=0}^N |\mathcal{W}_t|^2)$, where $\mathcal{W}_t$
denotes the set of active primal constraints at time $t$.

We find the approach taken by DAOCP to be very competitive on MPC problems, where,
after eliminating equality constraints, our assumption on the size of the
working set often holds. In future work we will explore the formulation briefly
described above.

\subsection{Solver implementation}

DAOCP is implemented in \texttt{C}, and relies on BLASFEO for a
significant portion of the numerical linear algebra. BLASFEO was chosen because
it is specifically designed to deliver high performance on Riccati and LQR-like
routines, where several operations are performed on relatively small matrices.
The solver code is available at https://github.com/AlbertoZaupa/daocp.

\section{Numerical Experiments}
\label{sec:numerical_experiments}
In this section we compare DAOCP against other state of the art solvers for MPC
on three robotics benchmarks. The first involves stabilizing the full-body model
of an Atlas humanoid robot, taken from~\cite{AB2024}. Then we reproduce the
benchmark proposed in~\cite{FS2025}, where a nonlinear MPC controller follows
time-varying reference trajectories for a quadruped robot, both in standing and
trotting scenarios. Finally, we compare the solvers on two challenging
reference tracking problems for quadrotors from~\cite{SS2022}. 

The Atlas benchmark is a linear time-invariant MPC problem, where the robot
dynamics are linearized about a reference equilibrium. The quadruped and
quadrotor benchmarks instead involve a nonlinear MPC controller in Real Time
Iteration (RTI)~\cite{SG2016} fashion, implemented in Acados.

We compare against DAQP~\cite{DA2022}, HPIPM~\cite{GF2020} and
qpOASES~\cite{JF2014}. 
The Interior Point method HPIPM is the most established OCP solver,
while DAQP and
qpOASES are the most popular and competitive AS solvers for MPC. For all
benchmarks the termination tolerance of HPIPM was set to $10^{-3}$, and the
chosen mode was \texttt{speed\_abs}. This configuration allowed HPIPM to
deliver the best possible solve times, with no observable degradation in
control performance. Otherwise all solvers were given default parameters. All
benchmarks were run on a laptop equipped with an Intel Core Ultra 7 265U processor.

\subsection{Atlas stabilization}
The dynamics of the Atlas robot are described by a state space model with 58 state
variables and 29 controls~\cite{AB2024}. The horizon length for this problem is $N=30$. Because the controller uses the linear model corresponding to the reference equilibrium, the reported solve times do not include factorization costs, since they can be reused across sampling times. 

The closed-loop simulation is run
50 times, and the per sample timings are averaged over all runs. We report in
Table~\ref{tab:atlas_timings} the average and worst case solve times over
simulation steps. Excluding qpOASES, which does not perform particularly well on
this benchmark, the results are in agreement with the typical differences
between Interior Point and Active Set methods: while DAQP and DAOCP are much
faster than HPIPM on average, the difference is less significant in the worst
case. Nevertheless, DAOCP outperforms all other solvers on both metrics. 

The gap in solve times between DAOCP and DAQP, which essentially implement the same high-level AS algorithm, highlights how being able to
exploit OCP structure is critical for this particular benchmark. In fact the horizon length $N=30$ greatly exceeds the thresholds $N_s(2) = 13$, $N_f(2) \approx 10.8$ established in Section~\ref{sec:when_ocp_matters}, and in particular the flop count of LQR solves is less than half of dense triangular solves. This is a key contribution to the 9$\times$ improvement in average solve time for DAOCP, which is further explained by the difference in linear algebra backend. 
\begin{table}[t]
    \caption{Atlas benchmark solve times.}
    \label{tab:atlas_timings}
    \centering
    \footnotesize
    \begin{tabular}{lrrrr}
        \toprule
        & qpOASES & HPIPM & DAQP & DAOCP \\
        \midrule
        Average [ms] & 3.64 & 1.75 & 0.27 & \textbf{0.03} \\
        Worst case [ms] & 176.59 & 4.99 & 3.82 & \textbf{1.08} \\
        \bottomrule
    \end{tabular}
\end{table}

\subsection{Quadruped control}
\begin{figure}[!t]
    \centering
    \includegraphics[width=\columnwidth]{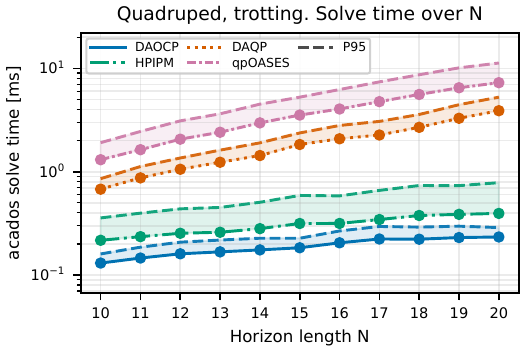}
    \caption{Average and p95 solve times for the quadruped benchmark in the trotting scenario. $N$ ranges from $10$ to $20$. The trends are consistent with the $N$-linear scaling of HPIPM and DAOCP, and the hybrid quadratic-cubic scaling of DAQP and qpOASES.}
    \label{fig:quadruped_horizon_sweep}
\end{figure}

In the quadruped control benchmark from~\cite{FS2025}, the robot is modeled as a
rigid body, and the control inputs are the ground-contact forces exerted
through the feet. The system state has 13 components, while the control input
has 12. Because $n_u$ and $n_x$ are relatively close, exploiting OCP structure
is again particularly favorable, and in fact the horizon thresholds from Section~\ref{sec:when_ocp_matters} are quite low at $N_s,N_f \approx 5.5$.

The benchmark involves two types of motion: standing, where the controller is
given time-varying body orientation references, and trotting, where the
quadruped is walking, following a time-varying velocity reference. In the
trotting scenario, the ability to deal efficiently with stage-wise equality
constraints is crucial, since at every sampling time the feet that are not in
contact with the ground are constrained to exert null force. All solvers have this capability except qpOASES, and this in part explains its gap in performance compared with DAQP.

Runtime measurements are averaged on a per-sampling time basis over 5
independent runs. Figure~\ref{fig:quadruped_horizon_sweep} shows the average and
p95 solve times for the four solvers in the trotting scenario, for horizon
lengths ranging from $N=10$ to $N=20$. The results highlight how HPIPM and DAOCP
scale more favorably with horizon length: over the given range of $N$, their
average solve time roughly doubles, while those of DAQP and qpOASES increase by more than a
factor of $4$. This is consistent with the inherent linear scaling in $N$ of
structure exploiting methods, and the quadratic scaling of dense AS iterations. Table~\ref{tab:quadruped_tab} reports average
and worst case solve times in both scenarios for $N=20$.

\begin{table}[t]
    \caption{Quadruped solve times, $N=20$.}
    \label{tab:quadruped_tab}
    \centering
    \footnotesize
    \begin{tabular}{llrr}
        \toprule
        Solver & Motion & Average [ms] & Worst case [ms] \\
        \midrule
        \multirow{2}{*}{HPIPM}
          & Standing & 0.240 & 0.443 \\
          & Trotting & 0.391 & 0.804 \\
        \addlinespace[1pt]
        \multirow{2}{*}{DAQP}
          & Standing & 2.575 & 3.835 \\
          & Trotting & 3.544 & 5.579 \\
        \addlinespace[1pt]
        \multirow{2}{*}{qpOASES}
          & Standing & 3.815 & 19.440 \\
          & Trotting & 7.332 & 11.579 \\
        \addlinespace[1pt]
        \multirow{2}{*}{DAOCP}
          & Standing & \textbf{0.188} & \textbf{0.329} \\
          & Trotting & \textbf{0.235} & \textbf{0.427} \\
        \bottomrule
    \end{tabular}
\end{table}

\subsection{Quadrotor reference tracking}
\begin{table}[t]
    \caption{Quadrotor solve times. Dynamically feasible trajectory.}
    \label{tab:quadrotor_feasible_timings}
    \centering
    \footnotesize
    \begin{tabular}{lrrrr}
        \toprule
        & qpOASES & HPIPM & DAQP & DAOCP \\
        \midrule
        Average [$\upmu$s] & 132.12 & 87.04 & 111.79 & \textbf{81.80} \\
        Worst case [$\upmu$s] & 143.90 & 100.53 & 117.70 & \textbf{88.03} \\
        \bottomrule
    \end{tabular}
\end{table}

\begin{table}[t]
    \caption{Quadrotor solve times. Dynamically infeasible trajectory.}
    \label{tab:quadrotor_infeasible_timings}
    \centering
    \footnotesize
    \begin{tabular}{lrrrr}
        \toprule
        & qpOASES & HPIPM & DAQP & DAOCP \\
        \midrule
        Average [$\upmu$s] & 354.81 & \textbf{150.62} & 181.32 & 282.01 \\
        Worst case [$\upmu$s] & 1060.89 & \textbf{227.94} & 331.81 & 540.51 \\
        \bottomrule
    \end{tabular}
\end{table}

In this benchmark we aim at testing the performance
of DAOCP on a problem formulation where exploiting OCP structure is not clearly advantageous. The problem dimensions are $N=20$, $n_x=13$ and $n_u = 4$. From Section~\ref{sec:when_ocp_matters}, we get $N_f \approx 20$ and $N_s \approx 29$, and considering the relatively small value of $n_u$, condensing based methods are expected to have an advantage, especially when it comes to linear system solves and similar operations relevant for AS solvers.

In~\cite[Eq.~36]{SS2022} the authors define a family of 144 reference trajectories
and we compare the four solvers on two instances of different levels of difficulty. The first is obtained by setting $V_{\text{max}} = 15$ m/s, $a_{\text{max}} = 30$ m/s$^2$, $n=1$, while the second with $V_{\text{max}} = 15$ m/s,
$a_{\max} = 60$ m/s$^2$ and $n=2$. The first trajectory was selected as one of the most challenging from the dynamically feasible subset, while the other, which is dynamically infeasible, was chosen specifically as the one where DAOCP performed the worst relative to the other solvers.

Tables~\ref{tab:quadrotor_feasible_timings} and~\ref{tab:quadrotor_infeasible_timings} report the runtime breakdown for the two trajectories.  
Timing results were averaged on a per simulation step
basis, across 400 individual runs. DAOCP is the fastest solver on the feasible trajectory, while it is outperformed by DAQP and HPIPM on the infeasible one. Across all 144 instances from~\cite{SS2022}, DAOCP was the fastest in terms of average and worst-case solve times in $40$\% and $49$\% of the cases, respectively. Instead HPIPM was the fastest on average $54$\% of the times, and $40$\% in the worst case. In the remaining few cases DAQP was the fastest, while qpOASES was never faster than all three other solvers. 

While dense solvers were expected to have an edge given the problem dimensions, dedicated OCP solvers were generally faster. The reason for this discrepancy is that the amount of constant work that structure exploiting solvers perform at every sample time is less than for dense solvers. This point becomes clear when comparing the performance of DAOCP and DAQP on the two trajectories. At every sample time DAQP needs to condense the problem and compute the full Gramian factorization $\mathcal{H}=MM^{\top}$, while DAOCP can start solving the problem right away. However, as established above, AS iterations are more expensive for DAOCP, which is therefore slower when several iterations are needed before reaching convergence, as is the case on the infeasible trajectory.

\section{Conclusion}
\label{sec:conclusion}
In this paper we presented DAOCP, a dual Active Set solver for linear quadratic
OCPs. The solver is inspired by the established QP solver DAQP, and it improves
its performance on OCP-structured problems by relying on the relationship between the Riccati recursion of an LQR problem with
stage-wise equality constraints and a specific LDL factorization of the
corresponding condensed KKT matrix.

DAOCP is compared with HPIPM, DAQP and qpOASES on a set of robotics-inspired
benchmarks, where it showcases compelling performance, combining the desirable
features of Active Set methods and the advantages of OCP solvers in dimensionality regimes typical of robotics.

\onecolumn
\appendix
\subsection{Proof of Theorem~\ref{thm:MuMxi_rows_computation}}
\label{app:proof}
Before proving the statement of Theorem~\ref{thm:MuMxi_rows_computation} in full generality, we consider the special case $N=2$. This allows us to more clearly illustrate the key mechanisms that are employed in the general proof that follows.

\begin{proof}[Proof of Theorem~\ref{thm:MuMxi_rows_computation}, N=2]
We begin by writing the KKT matrix $\mathcal{K}$ of Equation~\eqref{eq:KKT_matrix} after introducing the following time-based block ordering for its rows and columns
\[
u_{N-1},\, \xi_{N-1}, \, u_{N-2}, \, \xi_{N-2}, \, \dots, \, u_0, \, \xi_0
\]
where $\xi_t$ blocks are associated to the rows of $E$ that correspond to constraints at time $t$. We denote by $\bar{\mathcal{K}}$ the KKT matrix in this basis, which reads
\begin{gather}
\label{eq:KKT_matrix_time_rev}
        \bar{\mathcal{K}} = \begin{bmatrix}
                R_1 + B_1^{\top}Q_2 B_1 & (D_1^u)^{\top} & & \\
                D_1^u & 0 & & \\
                B_0^{\top} S_1^{\top} + B_0^{\top} A_1^{\top} Q_2 B_1 & B_0^{\top} (D^x_1)^{\top} & R_0 + B_0^{\top} Q_1 B_0 + B_0^{\top}A_1^{\top} Q_2 A_1 B_0 & (D_0^u)^{\top} \\ %u_0
                0 & 0 & D_0^u & 0
        \end{bmatrix}
\end{gather}
where the top-right block is left blank because the matrix is symmetric. In the
following, the submatrix of $\bar{\mathcal{K}}$ obtained from the rows
associated to $(u_{\tau}, \xi_{\tau})$ and the columns associated to $(u_t,
\xi_t)$ will be denoted as the $(\tau, t)$-block of
$\bar{\mathcal{K}}$.

We are now going to compute an LDL factorization of
$\bar{\mathcal{K}}$ of the form
\[
\bar{\mathcal{K}} = \bar{L} \Sigma \bar{L}^{\top}
\]
where $\Sigma$ is diagonal and $\bar{L}$ is lower triangular and invertible.
This factorization is guaranteed to exist because $H \succ 0$ and $E$ is full
row-rank. In order to compute such an LDL factorization, we employ the
classical factorization procedure which starts by identifying a block
partitioning of the form 
\[
\bar{\mathcal{K}} = \begin{bmatrix}
        \bar{\mathcal{K}}_{1,1} & \\
        \bar{\mathcal{K}}_{0,1} & \bar{\mathcal{K}}_{0,0}
\end{bmatrix}
\]
chosen so that we know how to factorize $\bar{\mathcal{K}}_{1,1} =
\bar{L}_{1,1} \Sigma_1 \bar{L}_{1,1}^{\top}$. We then begin to fill in the LDL
of $\bar{\mathcal{K}}$ as follows
\[
\bar{L} = \begin{bmatrix}
        \bar{L}_{1,1} &  \\
        \bar{L}_{0,1} & \star
\end{bmatrix}, \quad \Sigma = \operatorname*{blockdiag}(\Sigma_1, \star)
\]
where $\bar{L}_{0,1} = \bar{\mathcal{K}}_{0,1} \bar{L}_{1,1}^{-\top}
\Sigma_1^{-1}$, and the blocks denoted by $\star$ will be computed in the
following steps of the procedure. We then update the bottom-right block of
$\bar{\mathcal{K}}$ as $$\bar{\mathcal{K}}_{0,0} \leftarrow
\bar{\mathcal{K}}_{0,0} - \bar{L}_{0,1}\Sigma_1 \bar{L}_{0,1}^{\top}$$ and
iterate the same steps. We refer to the $\bar{\mathcal{K}}_{0,0}$ update as
the \textit{Schur complement update}, because $\bar{\mathcal{K}}_{0,0} -
\bar{L}_{0,1} \Sigma_1 \bar{L}_{0,1}^{\top}$ is the Schur complement of
$\bar{\mathcal{K}}$ with respect to $\bar{\mathcal{K}}_{1,1}$.

We now apply the described procedure to the matrix $\bar{\mathcal{K}}$ in
Equation~\eqref{eq:KKT_matrix_time_rev}. Notice that the $(1,1)$-block of
$\bar{\mathcal{K}}$ is a KKT matrix with standard variable ordering, with
positive definite Hessian and full row-rank constraint matrix. It therefore
admits a signed Cholesky factorization (as defined in
Equation~\eqref{eq:singed_cholesky}). Define $P_2 = Q_2$ and 
\[
\begin{aligned}
\Lambda_1 &= \operatorname*{signedchol}\left(\begin{bmatrix}
        R_1 + B_1^{\top} P_2 B_1 & (D_1^u)^{\top}\\
        D_1^u & 0
\end{bmatrix}\right) \\
&= \begin{bmatrix}
        \Lambda_1^{uu} & \\
        \Lambda_1^{\xi u} & \Lambda_1^{\xi \xi}
\end{bmatrix} \\
\begin{bmatrix}
        K_1^u & K_1^{\xi}
\end{bmatrix} &= \begin{bmatrix}
        S_1^{\top} + A_1^{\top} P_2 B_1 & (D_1^x)^{\top}
\end{bmatrix} \Lambda_1^{-\top} \\
J_1 &= \operatorname*{blockdiag}(I_{n_u}, -I_{\rho_1}) \,.
\end{aligned}
\]
Following the procedure above we can begin to fill in the LDL of $\bar{\mathcal{K}}$ as follows
\[
\begin{aligned}
        \bar{L} = \begin{bmatrix}
                \Lambda_1^{uu} & & \hphantom{\Lambda_0^{uu}} & \hphantom{\Lambda_0^{\xi\xi}} \\
                \Lambda_1^{\xi u} & \Lambda_1^{\xi \xi} & & \\
                B_0^{\top} K_1^u & - B_0^{\top} K_1^{\xi} & \multicolumn{2}{c}{\multirow{2}{*}{$\star$}}\\
                0 & 0 & &
        \end{bmatrix} \\
        \Sigma = \operatorname*{blockdiag}(I_{n_u}, -I_{\rho_1}, \star) \,.
\end{aligned}
\]
We then proceed to update the $(0,0)$-block of $\bar{\mathcal{K}}$. Notice
that only its top-left sub-block will be updated

\begin{gather*}
\bar{\mathcal{K}}_{0,0} \leftarrow 
\begin{bmatrix}
        R_0 + B_0^{\top} \bigl(Q_1 + A_1^{\top} P_2 A_1 - K_1^u (K_1^u)^{\top} + K_1^{\xi} (K_1^{\xi})^{\top}\bigr) B_0 & (D_0^u)^{\top} \\
        D_0^u & 0
\end{bmatrix} = \begin{bmatrix}
        R_0 + B_0^{\top} P_1 B_0 & (D_0^u)^{\top} \\
        D_0^u & 0
\end{bmatrix}
\end{gather*}
where we defined 
\[
    P_1 = Q_1 + A_1^{\top}P_2 A_1 - K_1^u (K_1^u)^{\top} + K_1^{\xi} (K_1^{\xi})^{\top}\,.
\]
This is the crucial step: the Schur complement update propagates the cost-to-go
matrix $P_t$, obtained from the Riccati recursion stated in
Equation~\eqref{eq:eqcon_lqr_riccati}. Then defining 
\[
\begin{aligned}
        \Lambda_0 &= \begin{bmatrix}
                \Lambda_0^{uu}    & \\
                \Lambda_0^{\xi u} & \Lambda_0^{\xi \xi}
        \end{bmatrix} = \operatorname*{signedchol}\left(\begin{bmatrix}
                R_0 + B_0^{\top} P_1 B_0 & (D_0^u)^{\top} \\
                D_0^u                    & 0
        \end{bmatrix}\right) \\
        J_0 &= \operatorname*{blockdiag}(I_{n_u}, -I_{\rho_0})
\end{aligned}
\]
the complete LDL factorization of $\bar{\mathcal{K}}$ is given by
\begin{equation}
\label{eq:N2_LDL}
\begin{aligned}
        \bar{L} &= \begin{bmatrix}
                \Lambda_1^{uu} & & & \\
                \Lambda_1^{\xi u} & \Lambda_1^{\xi \xi} & & \\
                B_0^{\top} K_1^u & -B_0^{\top} K_1^{\xi} & \Lambda_0^{uu} & \\
                0 & 0 & \Lambda_0^{\xi u} & \Lambda_0^{\xi \xi}
        \end{bmatrix} \\
        \Sigma &= \operatorname*{blockdiag}(I_{n_u}, -I_{\rho_1}, I_{n_u}, -I_{\rho_0}) \,.
\end{aligned}
\end{equation}
We now move on to show that the LDL in~\eqref{eq:N2_LDL} allows us to solve the linear systems of Theorem~\ref{thm:MuMxi_rows_computation} through the stated recursion. Suppose that the newly added constraint is associated to stage index $t=1$ (we show the $t=1$ case because it is more illustrative than $t=0$ or $t=2$, but the same procedure applies in the other two cases), and call $c_u$ and $c_x$ the corresponding rows of $C_1^u$ and $C_1^x$, respectively. Let $g$ be the vector right-hand side defined by Equation~\eqref{eq:MuMxi_linear_system}. We reorder the slices of $g$ according to the proposed variable ordering, and aim to solve the linear system
\[
\begin{bmatrix}
        \Lambda_1^{uu} & & & \\
        \Lambda_1^{\xi u} & \Lambda_1^{\xi \xi} & & \\
        B_0^{\top} K_1^u & -B_0^{\top} K_1^{\xi} & \Lambda_0^{uu} & \\
        0 & 0 & \Lambda_0^{\xi u} & \Lambda_0^{\xi \xi}
\end{bmatrix} \begin{bmatrix}
        y_1 \\
        v_1 \\
        y_0 \\
        v_0
\end{bmatrix} = \begin{bmatrix}
        c_u \\
        0 \\
        B_0^{\top} c_x \\
        0
\end{bmatrix} \,.
\]
The solution of the system above is given by
\[
\begin{aligned}
        \begin{bmatrix}
                y_1 \\
                v_1
        \end{bmatrix} &= \Lambda_1^{-1} \begin{bmatrix}
                c_u \\
                0
        \end{bmatrix} \\
        \begin{bmatrix}
                y_0 \\
                v_0
        \end{bmatrix} &= \Lambda_0^{-1} \begin{bmatrix}
                B_0^{\top}p_1 \\
                0
        \end{bmatrix} \\
        p_1 &= c_x - K_1^u y_1 + K_1^{\xi} v_1
\end{aligned}
\]
which is exactly the stated recursion.

Finally, to conclude the proof we need to show that $\bar{L}$ can be used to construct, in the original basis, a factorization of $\mathcal{K}$ that reads
\[
\mathcal{K} = L \operatorname*{blockdiag}(I_n, -I_{\rho}) L^{\top} \,.
\]
This is easily obtained by setting $L = \Pi^{\top} \bar{L} \Pi$ where $\Pi$ is the permutation matrix associated to the proposed variable reordering.
\end{proof}
The proof for the $N=2$ case highlights the key fact that we rely on to
construct a factorization of the KKT matrix that is related to the Riccati
recursion. After reordering the rows and columns of $\mathcal{K}$, when running
the typical LDL factorization procedure, Schur complement updates propagate the
cost-to-go matrices given by the Riccati recursion. For this reason, the
obtained LDL has a specific structure that can be used when solving linear
systems involving the lower triangular factor.

We now move on to prove the general $N \geq 1$ case through an inductive
argument. As before, we first show that after variable reordering we can obtain
an LDL of $\bar{\mathcal{K}}$ with specific structure, and then use this to
prove that the linear system in Theorem~\ref{thm:MuMxi_rows_computation} can be
solved through the stated recursion.

\begin{proof}[Proof of Theorem~\ref{thm:MuMxi_rows_computation}]
Let $\bar{\mathcal{K}}$ be the KKT matrix of
Equation~\eqref{eq:KKT_matrix}, after row-column reordering given by
\[
u_{N-1},\,\xi_{N-1},\,u_{N-2},\,\xi_{N-2},\,\dots,\,u_0,\xi_0 \,.
\]
In the following we are going to refer to matrix products of the form $A_{\tau}
\dots A_{s}$, where we use the convention $A_{\tau} \dots A_{s} = I$ if $\tau <
s$. Similarly, for the transposed version $A_{s}^{\top} \dots A_{\tau}^{\top} =
I$, if $\tau < s$.

We rely on an inductive argument based on the hypothesis that, while applying
the LDL factorization procedure, before processing the $(k,k)$-block along the
diagonal, the matrix $\bar{L}$ is such that
\begin{equation}
\label{eq:L_structure}
\begin{aligned}
\bar{L}_{\tau,\tau} &= \Lambda_\tau\\
\bar{L}_{s,\tau} &= \begin{bmatrix}
        B_{s}^{\top}A_{s+1}^{\top} \cdots A_{\tau-1}^{\top} K_{\tau}^u & - B_s^{\top} A_{s+1}^{\top}\cdots A_{\tau-1}^{\top} K_{\tau}^{\xi}\\
        0 & 0
\end{bmatrix}
\end{aligned}
\end{equation}
for $\tau > k,\,s<\tau$. Moreover $\Sigma_{\tau} = J_{\tau}$ for $\tau > k$, where 
\begin{equation}
\label{eq:J_definition}
J_{\tau} = \operatorname*{blockdiag}(I_{n_u}, -I_{\rho_{\tau}}) \,.
\end{equation}
Finally, with a slight abuse of notation, denoting by $\bar{\mathcal{K}}$
the part of the same matrix that we have yet to factorize, after applying the
Schur complement updates from the LDL procedure, we assume that for $\tau \leq
k$ and $s < \tau$
\[
\begin{aligned}
    \bar{\mathcal{K}}_{\tau, \tau} &= \begin{bmatrix}
        \bar{H}_{\tau, \tau} & (D_{\tau}^u)^{\top} \\
        D_{\tau}^u & 0
\end{bmatrix} \\
    \bar{\mathcal{K}}_{s, \tau} &= \begin{bmatrix}
        M_{s,\tau} & N_{s,\tau} \\
        0 & 0
\end{bmatrix}
\end{aligned}
\]
where the first block of $\bar{\mathcal{K}}_{\tau, \tau}$ is given by
\[
\begin{aligned}
    \bar{H}_{\tau, \tau} &= R_{\tau} + \sum_{j=\tau+1}^{k} B_{\tau}^{\top} A_{\tau+1}^{\top} \cdots A_{j-1}^{\top} Q_j A_{j-1} \cdots A_{\tau+1} B_{\tau} \\
                             &+ B_{\tau}^{\top} A_{\tau+1}^{\top}\cdots A_{k}^{\top} P_{k+1} A_{k} \cdots A_{\tau+1} B_{\tau} \\
\end{aligned}
\]
and the two non-zero blocks of $\bar{\mathcal{K}}_{s, \tau}$ are given by
\[
\begin{aligned}
                  M_{s,\tau} &= B_s^{\top} A_{s+1}^{\top} \cdots A_{\tau-1}^{\top} S_{\tau}^{\top} +\sum_{j=\tau+1}^{k} B_s^{\top} A_{s+1}^{\top} \cdots A_{j-1}^{\top} Q_j A_{j-1} \cdots A_{\tau+1} B_{\tau} \\
                             &+B_s^{\top} A_{s+1}^{\top} \cdots A_{k}^{\top} P_{k+1} A_k \cdots A_{\tau+1} B_{\tau} \\
\end{aligned}
\]
and
\[
\begin{aligned}
                  N_{s,\tau} &= B_s^{\top} A_{s+1}^{\top} \cdots A_{\tau-1}^{\top} (D_{\tau}^x)^{\top} \,.
\end{aligned}
\]
Here $P_t$, $\Lambda_t$, $K_t^u$ and $K_t^{\xi}$ are defined as
in~\eqref{eq:eqcon_lqr_riccati},~\eqref{eq:eqcon_lqr_Lambda},~\eqref{eq:eqcon_lqr_K}.
Notice that the hypothesis is verified when $k=N-1$, which constitutes our base
case. 

Now we proceed with the next step in the factorization procedure. By our
inductive hypothesis, we have
\[
\bar{\mathcal{K}}_{k,k} = \begin{bmatrix}
        R_k + B_k^{\top} P_{k+1} B_k & (D_k^u)^{\top} \\
        D_k^u & 0
\end{bmatrix}
\]
and thus, from the definition of $\Lambda_k$ and $J_k$
\[
\bar{\mathcal{K}}_{k,k} = \Lambda_k J_k \Lambda_k^{\top}  \,.
\]
This implies that
$\bar{L}_{k,k} = \Lambda_k$, $\Sigma_k = J_k$. Then from the assumption on
$\bar{\mathcal{K}}_{s,k}$ for $s<k$, we have
\[
    \bar{L}_{s,k} = \begin{bmatrix}
            M_{s,k} & N_{s,k} \\
            0 & 0
    \end{bmatrix} \bar{L}_{k,k}^{-\top} \Sigma_k^{-1} = \begin{bmatrix}
            B_s^{\top} A_{s+1}^{\top} \cdots A_{k-1}^{\top} K_k^u & - B_s^{\top} A_{s+1}^{\top} \cdots A_{k-1}^{\top} K_k^{\xi} \\
            0 & 0
    \end{bmatrix} \,.
\]
Notice that the matrices $\bar{L}_{s,k}$ for $s\leq k$ have exactly the
expression stated by our inductive hypothesis in
Equation~\eqref{eq:L_structure}.

We now update the trailing Schur complement of $\bar{\mathcal{K}}$. Due to
the sparsity pattern of $\bar{L}_{s,k}$, only the $\bar{H}_{\tau,\tau}$ and
$M_{s,\tau}$ blocks are updated. In particular we have
\[
\begin{aligned}
        \bar{H}_{\tau,\tau} &\leftarrow R_{\tau} + \sum_{j=\tau+1}^{k} B_{\tau}^{\top} A_{\tau+1}^{\top} \cdots A_{j-1}^{\top} Q_j A_{j-1} \cdots A_{\tau+1} B_{\tau} \\
                            % &+ B_{\tau}^{\top} A_{\tau+1}^{\top}\cdots A_{k}^{\top} P_{k+1} A_{k} \cdots A_{\tau+1} B_{\tau} \\
                            & + B_{\tau}^{\top} A_{\tau+1}^{\top} \cdots A_{k-1}^{\top} \left(A_{k}^{\top} P_{k+1} A_{k} - K_k^u (K_k^u)^{\top} + K_k^{\xi}(K_k^{\xi})^{\top}\right) A_{k-1} \cdots A_{\tau+1} B_{\tau} \\
                            % & + B_{\tau}^{\top} A_{\tau+1}^{\top} \cdots A_{k-1}^{\top} K_k^{\xi}(K_k^{\xi})^{\top} A_{k-1} \cdots A_{\tau+1} B_{\tau}\\
                            &= R_{\tau} + \sum_{j=\tau+1}^{k-1} B_{\tau}^{\top} A_{\tau+1}^{\top} \cdots A_{j-1}^{\top} Q_j A_{j-1} \cdots A_{\tau+1} B_{\tau} \\
                            &+ B_{\tau}^{\top} A_{\tau+1}^{\top}\cdots A_{k-1}^{\top} P_{k} A_{k-1} \cdots A_{\tau+1} B_{\tau} \\
\end{aligned}
\]
where we used the fact that $P_k = Q_k + A_k^{\top}P_{k+1}A_k - K_k^u
(K_k^u)^{\top} + K_k^{\xi}(K_k^{\xi})^{\top}$. The expression for
$\bar{H}_{\tau,\tau}$ after the update also agrees with our
inductive hypothesis. A similar computation shows that the same is true for
$M_{s,\tau}$, $s<\tau< k$. This proves that the inductive hypothesis continues
to hold after processing the $(k,k)$-block along the diagonal, and consequently
we obtain the desired factorization, given by Equation~\eqref{eq:L_structure}
and $\tau=0,1,\dots,N-1$, $s<\tau$.

We now proceed to show that we can use this factorization to solve the linear
system of Theorem~\ref{thm:MuMxi_rows_computation} through the recursion
in~\eqref{eq:dual_recursion},~\eqref{eq:thm1_backward_rec}. Let $t\leq N-1$ (we address the $t=N$ case later) be
the time index associated to the constraint entering the active set, and denote
by $c_u$ and $c_x$ the corresponding rows of $C_t^u$, $C_t^x$. Denote by $g$
the right-hand side of Equation~\eqref{eq:MuMxi_linear_system}, and by
$\bar{g}$ the same vector after reordering according to 
\[
u_{N-1},\,\xi_{N-1},\,u_{N-2},\,\xi_{N-2},\,\dots,\,u_0,\,\xi_0 \,.
\]
We define the $\tau$-slice $\bar{g}_{\tau}$ as the vector obtained from the
entries corresponding to $(u_{\tau}$, $\xi_{\tau})$ of $\bar{g}$. Notice
that
\[
\begin{aligned}
\bar{g}_{\tau} &= 0\,,\quad \tau > t \\
\bar{g}_{t} &= \begin{bmatrix}
        c_u \\
        0
\end{bmatrix} \\
\bar{g}_{\tau} &= \begin{bmatrix}
        B_{\tau}^{\top} A_{\tau+1}^{\top} \cdots A_{t-1}^{\top} c_x \\ 0
\end{bmatrix}\,,\quad \tau < t \,.
\end{aligned}
\]
Denote also by $\bar{m}$ the vector $\begin{bmatrix}
        m_u^{\top} & m_{\xi}^{\top}
\end{bmatrix}^{\top}$ after the same reordering, whose slices are $y_t$ and
$v_t$. Solving the system
\[
\bar{L} \bar{m} = \bar{g}
\]
we see that for $\tau > t$, $y_{\tau} = 0$ and $v_{\tau} = 0$. Moreover
\[
\begin{bmatrix}
        y_t \\
        v_t 
\end{bmatrix} = \Lambda_t^{-1} \begin{bmatrix}
        c_u \\
        0
\end{bmatrix} \,.
\]
We can then update the right-hand side of the linear system with the
contribution from $y_t$ and $v_t$. From the structure of $\bar{L}_{\tau, t}$ in
Equation~\eqref{eq:L_structure} we get
\[
\begin{aligned}
\bar{g}_{\tau} &\leftarrow \bar{g}_{\tau} - \begin{bmatrix} B_{\tau}^{\top} A_{\tau+1}^{\top} \cdots A_{t-1}^{\top} \left(K_t^u y_t - K_t^{\xi}v_t\right) \\0 \end{bmatrix} \\
&= \begin{bmatrix} B_{\tau}^{\top} A_{\tau+1}^{\top} \cdots A_{t-1}^{\top}\left(c_x - K_t^u y_t + K_t^{\xi}v_t\right) \\ 0 \end{bmatrix} \\
&= \begin{bmatrix} B_{\tau}^{\top} A_{\tau+1}^{\top} \cdots A_{t-1}^{\top} p_t \\ 0 \end{bmatrix} \,.
\end{aligned}
\]
where we used the definition of $p_t$ from Equation~\eqref{eq:dual_recursion}.
Now assume that while solving the linear system, before solving for $y_k$ and
$v_k$, the remainder of the right-hand side, after applying the updates from
$y_{\tau}$ and $v_{\tau}$ for $\tau > k$, reads
\[
\bar{g}_{\tau} = \begin{bmatrix}
        B_{\tau}^{\top}A_{\tau+1}^{\top}\cdots A_{k}^{\top} p_{k+1} \\
        0
\end{bmatrix} \,.
\]
Notice that this assumption holds for the $k=t-1$ case. Then we can solve for
$y_k$, $v_k$ obtaining
\[
\begin{bmatrix}
        y_k \\
        v_k
\end{bmatrix} = \Lambda_k^{-1} \begin{bmatrix}
        B_{k}^{\top} p_{k+1} \\
        0
\end{bmatrix}
\]
and then update the remainder of the right hand side, which gives for $\tau < k$
\[
\begin{aligned}
        \bar{g}_{\tau} &\leftarrow \bar{g}_{\tau} - \begin{bmatrix} B_{\tau}^{\top} A_{\tau+1}^{\top} \cdots A_{k-1}^{\top}\left(K_k^{u}y_k - K_k^{\xi}v_k\right)\\ 0 \end{bmatrix} \\
        &= \begin{bmatrix} B_{\tau}^{\top} A_{\tau+1}^{\top} \cdots A_{k-1}^{\top} \left(A_k^{\top} p_{k+1} - K_k^u y_k + K_k^{\xi}v_k\right) \\ 0 \end{bmatrix} \\
        &= \begin{bmatrix} B_{\tau}^{\top} A_{\tau+1}^{\top} \cdots A_{k-1}^{\top} p_k \\ 0 \end{bmatrix} \,.
\end{aligned}
\]
Therefore the structure of the right-hand side is preserved after the update,
and by induction we have that $y_k$ and $v_k$ are indeed given by the recursion
in~\eqref{eq:dual_recursion} and~\eqref{eq:thm1_backward_rec}. Notice that the
inductive argument above applies also to the $t=N$ case, where $g_N =
\begin{bmatrix}
        p_{N}^{\top} B_{N-1} & 0
\end{bmatrix}^{\top}$, with $p_N = c_x$.

Finally, the LDL factorization $\bar{\mathcal{K}} = \bar{L} \Sigma
\bar{L}^{\top}$ can be used to construct $L = \Pi^{\top} \bar{L} \Pi$, where
$\Pi$ is the permutation matrix associated with the proposed variable
reordering. The matrix $L$ satisfies $\mathcal{K} = L
\operatorname*{blockdiag}(I_n, -I_{\rho})L^{\top}$, and thus the proof is
concluded. 
\end{proof}

\subsection{Equality constrained LQR problems}
\label{app:eqcon_lqr}
Consider a generic LQR problem with stagewise equality constraints, under the
cost assumptions of Problem~\eqref{prob:ocp_problem}, of the form
\begin{equation}
\label{prob:eqcon_lqr}
\begin{array}{ll}
    \displaystyle\operatorname*{minimize}_{\{x_t\},\{u_t\}} & \displaystyle \sum_{t=0}^{N-1}
    \left(
    \frac{1}{2}
    \begin{bmatrix}
            u_t\\x_t
    \end{bmatrix}^{\top}
    \begin{bmatrix}
            R_t & S_t \\
            S_t^{\top} & Q_t
    \end{bmatrix}
    \begin{bmatrix}u_t\\x_t\end{bmatrix}
    +
    \begin{bmatrix}r_t\\q_t\end{bmatrix}^{\top}
    \begin{bmatrix}u_t\\x_t\end{bmatrix}
    \right) + \frac{1}{2} x_N^{\top} Q_N x_N +q_N^{\top} x_N\\
    \text{subject to} &x_{t+1}=A_t x_t + B_t u_t + w_t, \qquad t=0,\ldots,N-1,\\
                      &D_t^u u_t + D_t^x x_t = d_t, \qquad t=0,\ldots,N-1,\\
                      &D_Nx_N=d_N,\\
                      &x_0\text{ given}.
\end{array}
\end{equation}
We are going to show that the problem above can always be rewritten as
\begin{equation}
\label{prob:eqcon_lqr_processed}
\begin{array}{ll}
    \underset{\{x_t\},\{u_t\}}{\text{minimize}} & J_{\text{LQR}}\\
    \text{subject to} &x_{t+1} = A_t x_t + B_t u_t + w_t, \qquad t=0,\ldots,N-1,\\
                      &\bar{D}_t^u u_t+ \bar{D}_t^x x_t= \bar{d}_t, \qquad t=0,\ldots,N-1,\\
                      &N_0x_0=n_0,\\
                      &x_0\text{ given},
\end{array}
\end{equation}
where all the matrices $\bar{D}_t^u$ are full row rank. The equality $N_0x_0=n_0$, which was not present explicitly
in the original formulation, establishes whether the problem is solvable for
the given initial state.

We employ a Dynamic Programming argument, and begin to solve the problem by
optimizing over $u_{N-1}$. After defining $P_N=Q_N$, $p_N=q_N$, we solve
\begin{equation}
\label{prob:last-stage}
\begin{array}{ll}
    \underset{u_{N-1},x_N}{\text{minimize}} & \displaystyle \frac12
    \begin{bmatrix}
            u_{N-1}\\
            x_{N-1}
    \end{bmatrix}^{\top}
    \begin{bmatrix}
            R_{N-1}&S_{N-1}\\
            S_{N-1}^{\top}&Q_{N-1}
    \end{bmatrix}
    \begin{bmatrix}
            u_{N-1}\\x_{N-1}
    \end{bmatrix} +
    \begin{bmatrix}
            r_{N-1}\\q_{N-1}
    \end{bmatrix}^{\top}
    \begin{bmatrix}
            u_{N-1}\\x_{N-1}
    \end{bmatrix}
    +\frac12x_N^{\top} P_N x_N + p_N^{\top} x_N\\
    \text{subject to} \quad & x_N = A_{N-1} x_{N-1} + B_{N-1} u_{N-1} + w_{N-1},\\
    &D_{N-1}^u u_{N-1} + D_{N-1}^x x_{N-1} = d_{N-1},\\
    &D_N x_N = d_N.
\end{array}
\end{equation}
We then substitute the expression for $x_N$ given by the dynamics into the
objective and the constraints. For the constraints this gives
\begin{equation}
\label{eq:last-constraint-aggregation}
\begin{bmatrix}
 D_{N-1}^u & D_{N-1}^x\\
D_N B_{N-1} & D_N A_{N-1}
\end{bmatrix}
\begin{bmatrix}
        u_{N-1}\\
        x_{N-1}
\end{bmatrix}
=
\begin{bmatrix}
d_{N-1}
\\
d_N - D_N w_{N-1}
\end{bmatrix}.
\end{equation}
We call the block column multiplying $u_{N-1}$
\begin{equation*}
F_{N-1} =
\begin{bmatrix}D_{N-1}^u \\ D_N B_{N-1} \end{bmatrix}.
\end{equation*}
Performing a rank-revealing transformation on $F_{N-1}$, for example by
Gaussian elimination with pivoting, we obtain an invertible matrix $J_{N-1}$
such that
\begin{equation*}
J_{N-1} F_{N-1} = \begin{bmatrix}
        \bar{D}_{N-1}^u \\ 0
\end{bmatrix}
\end{equation*}
where $\bar{D}_{N-1}^u$ is full row-rank. Premultiplying both sides of
\eqref{eq:last-constraint-aggregation} by this same matrix $J_{N-1}$ gives
\begin{equation}
\label{eq:last-rank-revealing}
\begin{bmatrix}
\bar{D}_{N-1}^u & \bar{D}_{N-1}^x \\
0 & N_{N-1}
\end{bmatrix}
\begin{bmatrix} u_{N-1} \\ x_{N-1} \end{bmatrix}
=
\begin{bmatrix}
\bar{d}_{N-1} \\ n_{N-1}
\end{bmatrix},
\end{equation}

This transformation of the equality constraints can be interpreted as follows:
given the current value of $x_{N-1}$, by choosing $u_{N-1}$ appropriately we
can satisfy only a subset of the constraints imposed by
\begin{equation*}
D_{N-1}^u u_{N-1} + D_{N-1}^x x_{N-1} = d_{N-1},
\qquad D_N x_N = d_N.
\end{equation*}
Our ability to satisfy all the constraints therefore becomes a condition on
$x_{N-1}$ itself, namely
\begin{equation}\label{eq:new-terminal-condition}
N_{N-1} x_{N-1} = n_{N-1}.
\end{equation}
At this point, Problem~\eqref{prob:last-stage} can be solved by substituting
the dynamics in the cost and solving an appropriate KKT system of the form
\begin{equation}\label{eq:last-stage-kkt}
    \begin{bmatrix}
    R_{N-1} + B_{N-1}^{\top} P_N B_{N-1} & (\bar{D}_{N-1}^u)^{\top}\\
    \bar{D}_{N-1}^u & 0
    \end{bmatrix}
    \begin{bmatrix}u_{N-1}\\\xi_{N-1}\end{bmatrix}
    =
    \begin{bmatrix}
        -g_{N-1}(x_{N-1}) \\ b_{N-1}(x_{N-1}) 
    \end{bmatrix}.
\end{equation}
where the right-hand side is an affine function of $x_{N-1}$. The solution is unique because
the Hessian of the KKT system is positive definite and $\bar{D}_{N-1}^u$ has full
row rank. Moreover, the optimal value is quadratic in $x_{N-1}$ and has
the form
\begin{equation}\label{eq:terminal-cost-to-go}
\frac12 x_{N-1}^{\top} P_{N-1} x_{N-1} + p_{N-1}^{\top} x_{N-1} + \text{const}.
\end{equation}
It can be verified that $P_{N-1}$ has the expression given by
Equation~\eqref{eq:eqcon_lqr_riccati} (after replacing $D_t^u$ and $D_t^x$ with
$\bar{D}_t^u$ and $\bar{D}_t^x$), and that $p_{N-1}$ is obtained from a
recursion similar to~\eqref{eq:thm1_backward_rec}. However, for the purpose of
this argument we don't need their exact value.

As in usual dynamic-programming fashion, after eliminating $u_{N-1}$ and $x_N$,
we obtain a problem in $x_1,\ldots,x_{N-1}$ and $u_0,\ldots,u_{N-2}$ with the
same structure, whose terminal cost is \eqref{eq:terminal-cost-to-go} and whose
terminal constraint is \eqref{eq:new-terminal-condition}.  The same procedure
can be applied recursively, with the propagated state constraint at stage $t$
denoted by $N_tx_t=n_t$. At the initial stage this becomes $N_0 x_0 = n_0$,
which is a constraint on the initial datum that determines whether the equality
constraints are feasible.

Notice that the constraint-transformation argument does not require solving for
$u_t$ and $x_{t+1}$ at the same time, and it can therefore be interpreted as a
pre-processing step. In practice we transform the constraints while solving the
Riccati to obtain $\Lambda_t$~\eqref{eq:eqcon_lqr_Lambda}, $K_t^u$ and
$K_t^{\xi}$~\eqref{eq:eqcon_lqr_K}.

\bibliographystyle{IEEEtran}
%\addtolength{\textheight}{-12cm} 
\begingroup
\renewcommand{\footnotesize}{\fontsize{8}{8}\selectfont}
\bibliography{references}
\endgroup

\end{document}